\def\ps@headings{%

\def\@oddhead{\mbox{}\scriptsize\rightmark \hfil \thepage}%

\def\@evenhead{\scriptsize\thepage \hfil \leftmark\mbox{}}%

\def\@oddfoot{}%

\def\@evenfoot{}}
\newcommand{\beq}{\begin{equation}}
\newcommand{\eeq}{\end{equation}}
\newcommand{\beqa}{\begin{eqnarray}}
\newcommand{\eeqa}{\end{eqnarray}}
\newcommand{\beqan}{\begin{eqnarray*}}
\newcommand{\eeqan}{\end{eqnarray*}}
\def\BibTeX{{\rm B\kern-.05em{\sc i\kern-.025em b}\kern-.08et\kern-.1667em\lower.7ex\hbox{E}\kern-.125emX}}
\newtheorem{thm}{Theorem}
\newtheorem{lemma}{Lemma}
\newtheorem{proposition}{Proposition}
\newtheorem{definition}{Definition}
\begin{document}
\title{ Using Social Information for Flow Allocation in MANETs}
\author{
Andrew Clark$^1$, Amit Pande$^2$, Kannan Govindan$^2$,
Radha Poovendran$^1$ and Prasant Mohapatra$^2$\\
$^1$ Department of Electrical Engineering, Univ. Washington, USA \{awclark, rp3\}@u.washington.edu\\
$^2$ Department of Computer Science, Univ. California Davis, USA \{amit, gkannan, prasant\}@ cs.ucdavis.edu\\

}

\maketitle
\begin{abstract}
Ad hoc networks enable communication between distributed, mobile wireless nodes without any supporting infrastructure. In the absence of centralized control, such networks require node interaction, and are inherently based on cooperation between nodes. In this paper, we use social and behavioral trust of nodes to form a flow allocation optimization problem.
We initialize trust using information gained from users' social relationships (from social networks) and update the trusts metric over time based on observed node behaviors. 
We conduct analysis of social trust  using real data sets and used it as a parameter for performance evaluation of our frame work in ns-3. Based on our approach we obtain a significant improvement in both detection rate and packet delivery ratio using social trust information when compared to behavioral trust alone. Further, we observe that social trust is critical in the event of mobility and plays a crucial role in bootstrapping the computation of trust.
\end{abstract}

\section{Introduction}
\label{sec:intro}
Ad hoc networks  consist of wireless nodes, such as sensors, tablet computers, and smart phones, operating in the absence of supporting infrastructure.
 Due to the lack of a centralized authority, each node relies on its neighbors to identify multi-hop routes to its destination and forward packets along each route.
 At the same time, the intermediate nodes may exhibit selfish behavior by dropping packets in order to conserve scarce bandwidth and device resources.
	Furthermore, ad hoc networks may be deployed in the presence of malicious nodes, who join the network by masquerading as valid nodes and selectively drop, reroute, or tamper with packets~\cite{karlof2003secure}.

A common approach to mitigating selfish and malicious activity in ad hoc networks is through trust management~\cite{kannan_trust}.
In a trust management system, each node observes the behavior of each of its neighbors over a period of time and records instances of suspicious behavior, such as packet dropping, failure to follow MAC layer protocols, and broadcasting inaccurate routing information.  The nodes then form an empirical estimate of each neighbor's trustworthiness, denoted as a \emph{trust metric}.  By aggregating trust metric data from multiple nodes in a distributed fashion, each node can compute a global metric for the trustworthiness of each other network node, based on a combination of firsthand observation and reports from others.

Successful deployment of a trust management system carries several requirements.  First, the nodes must have a mechanism for determining whether packet drops and other suspicious events occur due to node misbehavior or communication and hardware failures.  Second, each node must have sufficient information to evaluate its neighbors' trustworthiness, in spite of mobility and network dynamics.  Third, the communication overhead of the network imposed by exchanging trust information must be minimized.  	

 In this paper, our insight is that the network users have existing social relationships with each other. When trust relationships between two nodes  exist, the nodes will not exhibit their selfish or malicious behavior. Such values can be quantified based on cached values of other contexts, in particular social networks. Social network data may be cached from online social networks, such as Facebook, LinkedIn, and Orkut, or may come from mobile ad hoc social networks, as proposed in~\cite{li09mobisn,sarigl09,gurecki09}.
	By leveraging the trust data accumulated through social relationships, a network node can more easily differentiate between benign failures due to device constraints and malicious behavior.
	This  reduces the amount of data collection and information exchange required to identify selfish or malicious nodes.
	Currently, however, there is no approach to incorporating social trust information in ad hoc network operation.

	We make two specific contributions:
	First, we develop methods for computing trust metrics based on social network data and composing social and behavioral trust metrics.
    We evaluate our trust metrics using experimental Facebook profile and wall post data.
		Second, we apply this trust information into a practical problem of flow allocation in MANETs. To this end, we propose a distributed optimization approach to allocating flows among multiple paths based on both network performance and the trust metrics of each path.
	
	We formulate flow allocation problems for two classes of utility, namely maximizing throughput and maximizing weighted flow diversity, and prove that our optimization algorithms lead to optimal flow allocation among trusted paths subject to capacity constraints.  Our approach is implemented through distributed network protocols and verified through ns-3 simulation study.  Our simulation results show that our approach provides high throughput and packet delivery even in the presence of malicious nodes. We obtain a significant improvement in both detection rate and packet delivery ratio using social trust information when compared to behavioral trust alone. Further, we observe that social trust is critical in the event of mobility (where behavioral values are not available) and plays a crucial role in bootstrapping the computation of trust.

The paper is organized as follows.   In Section \ref{sec:model}, we state our system model and definitions of social network trust.    Our proposed trust-based optimization framework and flow allocation protocols are described in Section \ref{sec:flow_alloc}.  Simulation results are provided in Section \ref{sec:simulation}.  Related work is reviewed in Section \ref{sec:related}.  Section \ref{sec:conclusion} concludes the paper.

\begin{figure}[t]
	\centering
		\includegraphics[scale=0.35]{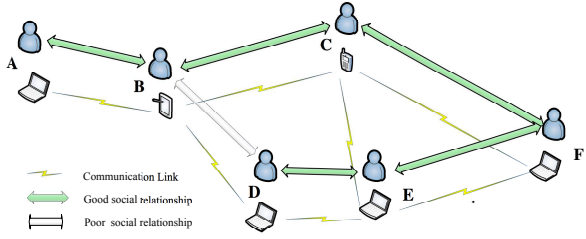}
	\caption{Simple diagram of an ad-hoc network. Six nodes are connected ad-hoc with each other via WiFi. Social relationships co-exist amongst some users.}
	\label{fig:manet_block_diagram}
\end{figure}

\section{Motivation}
\label{sec:motivation}

Wireless communications  and Internet have become commonplace and have completely transformed the social lives of people. Social networking sites such as Facebook (with over 800 million users), Google+, LinkedIn, Orkut (India, Brazil) and Renren (Chinese) have changed the paradigm of adolescent mating rituals, political activism, corporate management styles, classroom teaching and other dimensions of social life. In other words, social networking has made people more social inside the `net', than perhaps outside, in the real-world. This trend is on the increase.

More and more people are using these websites for social interactions and keeping others updated about themselves. Traditionally, people used these `connections' for communicating message from one person to another. People would leverage this social trust on one-another to communicate mails/letters or parcels to another person. In earlier times, if A is going to some city B, his friends will give him any parcels or mails/letters to their acquaintances in that place. Now, social relationships, being formed in online communities, we explore if these online social relationships can similarly help us in better `communication' amongst ourselves.
Basically, we explore a converse question: \textbf{\textit{Can social relationships affect the way wireless nodes interact with each other?}}

Consider the scenario given in Fig~\ref{fig:manet_block_diagram}. The active links between wireless devices are as shown in the figure. The social relationships between users are shown by green or red connections. It can be observed in this case that B has a good relationship with A and C, but poor relationship with D. Therefore,  D may drop a packet forwarded by B. Thus, poor social trust values are an indicative of `selfish' node behavior and higher social trust values will oblige a node to properly route the packets of another node. Such trust values may be derived from a direct relationship between the users, a relationship with a mutual acquaintance, membership in the same organization, or from online social network data. Unlike behavioral trust, social trust values exist independent of the ad hoc network, and are therefore not sensitive to changes in network topology or communication link failures. These social relationships in fact allow us to compute social trust values which empower ad hoc networks for rapid deployment.

In this paper, we consider a network where ad hoc nodes in ad hoc networks are member of social networks. Such combined networks are under active research consideration these days. CenseMe where mobile phones create mobile sensor networks and share the sensed information in social networks (Facebook) is one architecture in this direction. ~\cite{li09mobisn,sarigöl09,gurecki09} implement social networks for ad hoc nodes.

Use of trust relationships in this combined network addresses  two challenges. First, based on the behavioral trust on the user, we can determine (probabilistically), whether the packet drop is because of random failure (in case of high trust value), or due to malicious behavior. Secondly, although mobility may obstruct behavioral trust computation, social trust can be calculated based on existing social 	 relationships between users and ease trust assessment. 
\section{Trust Computation}
\label{sec:model}
As we mentioned earlier, trust is computed as a combination of behavioral and social trust values.
\begin{table}[t]
\caption{Parameters used in behavioral trust calculation}
\label{message_table_1}
\centering 
 \begin{tabular}{|p{1.5cm}| p{3cm}| p{3cm}|}
    \hline
    \textbf{Field}&\textbf{Description} & \textbf{Value}\\
    \hline
    $\alpha$=$\alpha_1$ & evidence of good behavior count & Integer\\
    \hline
    $\beta$=$\beta_1$ & evidence of bad behavior count & Integer\\
    \hline
    $b$ & belief about the nodes action (trust) & $\alpha(1-u)/(\alpha+\beta)$\\
    \hline
    $d$ & disbelief about the nodes action (distrust=1-trust) & $\beta(1-u)/(\alpha+\beta)$\\
    \hline
     $u$ & uncertainty on the nodes trust calculations & $12\times Var(beta(\alpha, \beta))$\\
    \hline
 \end{tabular}
\end{table}


\subsection{Calculation of behavioral trust}
\label{behave_trust_12}
Behavioral trust is measured by observing the behavior (actions) of one-hop neighbors. 
Behavioral trust is context-based and depends on the operational environment. The \emph{observational evidences} $\alpha_o$ (positive evidence) and $\beta_o$ (negative evidence) are collected based on the neighbors' actions (behaviors).

We illustrate the evidence based behavioral trust calculation by using the following example.
Let us consider a packet routing scenario in a MANET and let node `$j$' is a target node (trustee node) whose trust has to be calculated. All neighboring nodes within one hop proximity of $j$ can overhear both the incoming packets and the outgoing packets at $j$. The routing table is pre-agreed upon and available at every node and hence a neighboring node of $j$ can easily identify the right destination for each packet. These neighboring nodes (trustor nodes) can check the destination of each outgoing packet from node $j$ and compare the destination against the routing table available with them. If the packet advances towards the right destination as per the routing table, then the routing behavior of the trustee node is considered to be correct and trustor can increment the $\alpha_o$ evidence on the trustee. If the packet does not get routed, gets corrupted or modified, or gets routed along an incorrect path, then the experience is recorded as a misbehavior $\beta_o$ evidence on the trustee node. Here we assume that the trustor node can observe both the incoming and outgoing packets at trustee node.

If all transmissions are kept constant at pre-decided power levels ($P_0$), the trustor nodes can reasonably estimate the received signal strength (RSS) and signal-to-noise (SNR) values at trustee node to back-calculate approximate channel and hence packet loss rate ($plr$). Genuine packet loss rate(PLR) due to channel losses can be approximated as:
\[PLR= plr \times (\alpha_0 + \beta_0)\]
We can adjust $\alpha$ and $\beta$ values to account for packet loss as follows:
\[\alpha_1=\alpha_0 + PLR \textrm{    ,  } \beta_1=\beta_0 - PLR\]

\subsubsection*{Determination of trust based on behavior}
The observed evidence values $\alpha_1$ and $\beta_1$ are used as parameters of a Beta distribution as follows:
\begin{eqnarray}
P(x)=B(\alpha_1, \beta_1, x)=\frac{\Gamma(\alpha_1+\beta_1)}{\Gamma(\alpha_1)\Gamma(\beta_1)}x^{\alpha_1-1}(1-x)^{\beta_1-1}
\label{beta_eqn}
\end{eqnarray}
where $\Gamma(.)$ is the Gamma function. Now the trust opinion will be generated as a triplet $<b, d, u>$, where $b$ stands for node's belief that the neighbor will behave properly in next time, $d$ is the node's disbelief of good behavior and $u$ is the uncertainty of the opinion. The expectation of the beta distribution ( $\alpha_1/(\alpha_1+\beta_1)$), is used to derive $b$ and $d$ and the variance ($Var(t)=\alpha_1\beta_1/(\alpha_1+\beta_1)^2(\alpha_1+\beta_1+1)$), normalized by factor $12$ is used to derive $u$ as shown in Table. \ref{message_table_1}. 
This belief value $b$ is used as behavioral trust in our work.

We validate this trust model using empirical simulations. A scenario of $40$ nodes deployed over a square region of 100X100 m was assumed.  Ad hoc On-demand Multipath Distance Vector (AOMDV) routing algorithm is used for packet forwarding \cite{marina2002ad}. The behavior was averaged over 100 realizations. Our focus is on testing the effectiveness of our Trust Model, i.e., convergence and accuracy of the trust
calculation in determining the exact value of trust. From Fig. \ref{behavioral_routing_r1} we can see that on an average our approach converges to the correct trust value of these nodes in 14 event observations. The initial trust is always less due to uncertainty factor ($u$) which is high in the initial stage. However, as the number of evidences increases the uncertainty slowly reduces and hence the trust value converges to its original value.


\subsection{Calculation of social trust}
As we have seen before, the convergence of behavioral trust takes certain time. If we need to make the trust decision quickly or if we cannot observe the one hop behaviors, we can use social media and derive trust based on social profiles of users.

Here, we are assuming that modes in adhoc network have access to a social network. This may not always be true and we may have to revert to behavioral trust only formulation in those scenarios. However, many mobile social networks such as Adsocial~\cite{sarigl09} or MobiSN~\cite{li09mobisn} are emerging lately. AdSocial targets small-scale scenarios such as  co-workers sharing calendar information. In one scenario, 30+ users used AdSocial to call each other from room to room, chat, and keep track of users’ location in the hotel~\cite{sarigl09}. More often than that, people are inter-connected using social networks such as Facebook or Google+. The offline, cached information from such networks can be used to obtain social trust value. In this paper, we demonstrate this with some scenario specific to Facebook users, although the analysis is generic.

\begin{figure*}[t]
\hspace{-0.15in}
\begin{minipage}[b]{0.33\linewidth}
	\centering
		\includegraphics[width=2.6in, height =1.6 in]{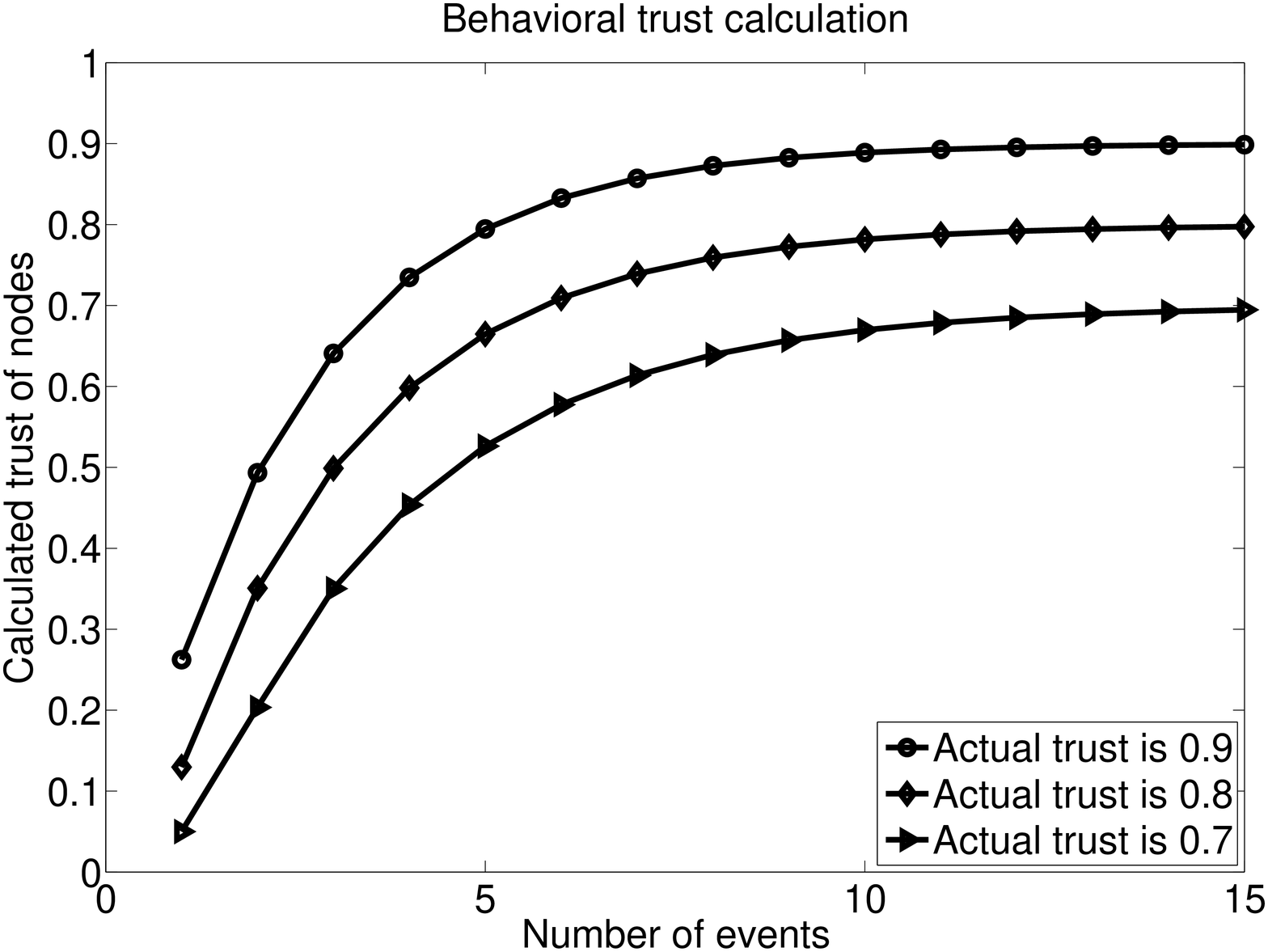}
\caption{Packet routing based behavioral trust calculation for three actual trust values (results based on ns-3 simulation)}
\label{behavioral_routing_r1}
\end{minipage}
\hspace{0.5cm}
\begin{minipage}[b]{0.33\linewidth}
	\centering
		\includegraphics[width=2.6in, height =1.6 in]{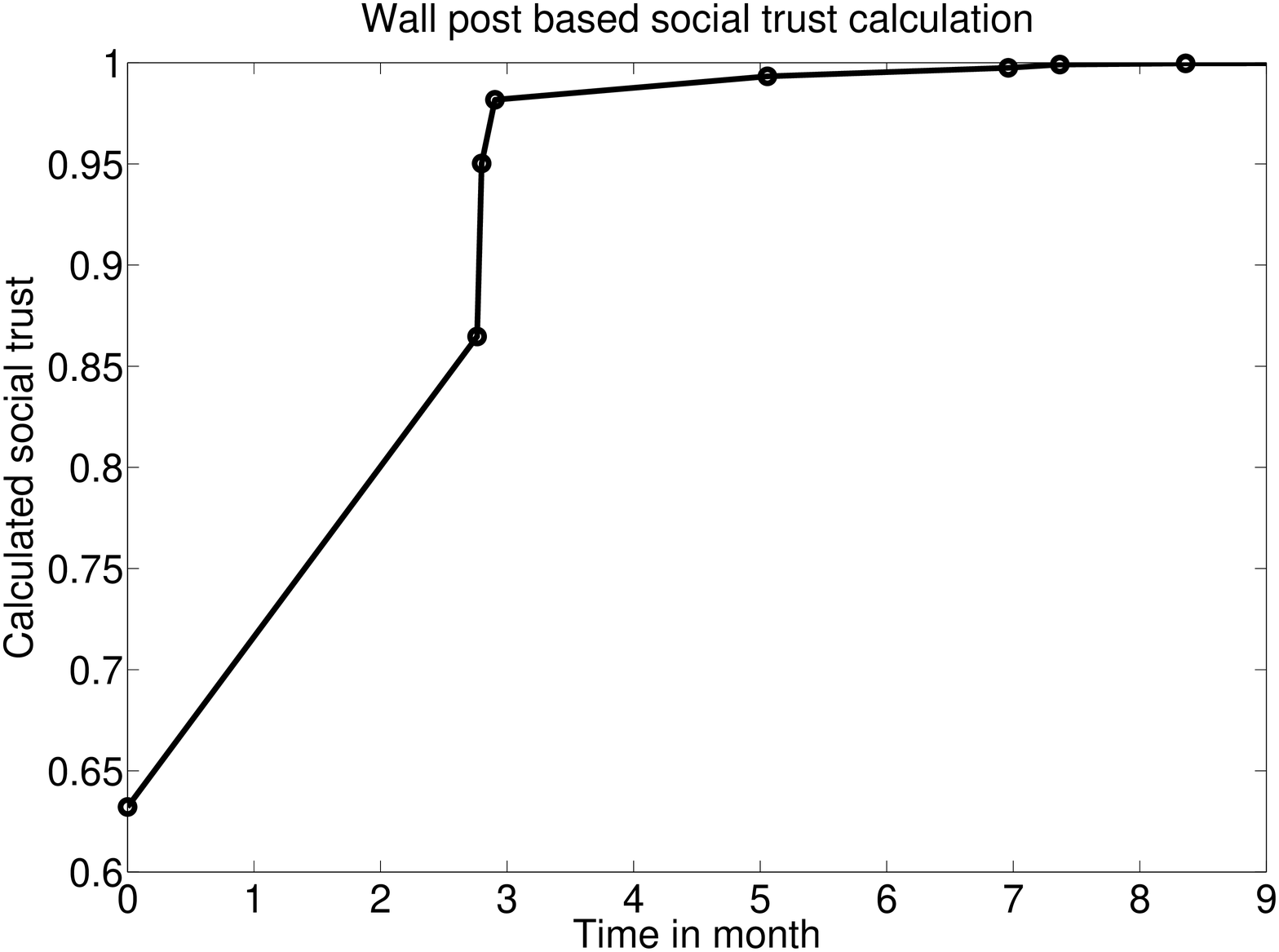}
\caption{Sample social trust computation using Wallpost interactions of Facebook users~\cite{Viswanath:2009}. Trust values stablize over a period of few months.}
\label{social_r1}
	\end{minipage}
\begin{minipage}[b]{0.33\linewidth}

	\centering
		\includegraphics[width=2.6in, height =1.6 in]{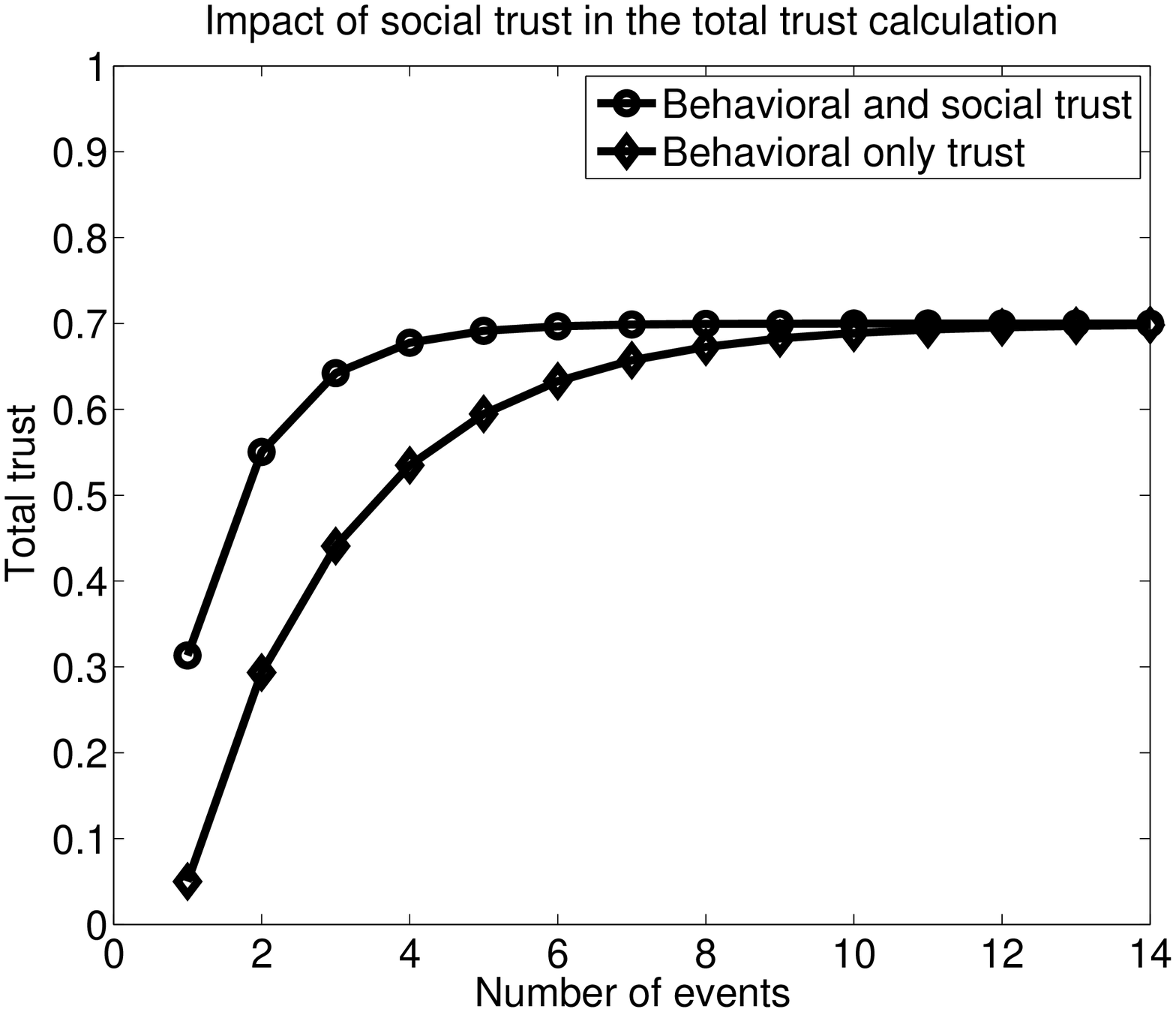}
\caption{Improvement in trust computation using social information (from Facebook data~\cite{Viswanath:2009}). Only 4 events are sufficient to estimate trust value (as against 10-14)}
\label{fig:social_behavioural_comparison}
	\end{minipage}
\end{figure*}


A typical user profile in an on-line social network is characterized by its profile {\it features} like {\it location, hometown, activities, interests, favorite music, professional associations}, etc. In sites like Facebook and Orkut, users establish connectivity and then friendships when they discover similar profile entries. In LinkedIn people connect amongst each other to build professional networks if they find profiles match in terms of affiliations, qualifications and work. It implies that people tend to make social connection with certain trust level if they find a person of similar profile in terms of {\it features}. Similarly, people interact more only when they find that the person is socially reliable to an extent. That means, the frequent social interaction can be used to derive trust.

There can be various measures of social trust amongst two users based on information obtained from online social networks. We demonstrate the concept using inter-profile similarity and wall post interactions of users. In a nut shell, if a user meets a strange person and receives information from a stranger, as long as the person is similar in terms of either location, interests, and other related {\it features}, then he tends to trust that person more. Also, a person continues to interact and exchange data with a person only with a trusted person.
%
With this hypothesis we follow the below steps in obtaining the social trust


\begin{itemize}
\item If the two users are not connected socially (e.g., not friends in Facebook) than their inter profile similarity factor (IPS) will provide an estimate of social trust.
\item If the users are already socially connected (for example if they are friends in Facebook) then their relative frequency of wall post interactions will reflect the trust level. More frequent wall posting with a particular user compared to all other users, will translate to higher trust and fewer postings will translate to lower trust.
\item If the users are social connected and also frequently interacts then the combinations of their IPS and interaction factor will provide the social trust value.
\end{itemize}

{\bf Inter Profile Similarity (IPS):}
Inter profile similarity is defined as how similar two users are in terms of various semantics (metrics). The IPS is measured in the interval of (0, 1) where 1 denotes complete similarity and 0 denotes complete dissimilarity. Our IPS is measured using natural language processing (NLP) inspired by reference \cite{felix_ips}. IPS compares short text phrases between two profiles. The benefit of using NLP is that the different words used in the profile which has same meaning will be identified correctly. The learning process of the NLP also keep improving with the available data size.

In our work we use the social network data such as Facebook and Orkut as an illustrative examples for measuring the IPS factor. We assume that the profiles in social networks are described using a set of key words and we have access to those profile details. Facebook allows users to describe themselves in a number of different categories; however, we concentrate only on the following {\bf \it features}: (1) activities, (2) interests, (3) gender, and (4) affiliations. Using the affiliations, users are able to restrict the set of people that can view their profile, the default policy is that only those in the same affiliation or are immediate friends may view each other's profiles. 
We chose to focus on these categories as they generally use terms that exist in a dictionary.
These semantics have been analyzed and an IPS score of 0 to 1 calculated based on the algorithm proposed in \cite{felix_ips} for each features. If the IPS score is above $0.5$ then we increase the positive evidence count ($\alpha_s$) or else we increment the negative evidence count ($\beta_s$). Now using these $\alpha_s$ and $\beta_s$ we can determine the social trust as follows:

$I_{i,j}(t)$ is trust of $i$ on $j$ computed at time $t$ based on Inter-profile similarity. It is computed using a Beta distribution similar to behavioral trust.

\begin{equation}
I_{i,j}(t) = \frac{\alpha_s}{(\alpha_s+\beta_s)}(1-u)\textrm{, where }  u = 12 \times {\rm Var} ( B(\alpha_s, \beta_s) )
\label{social_ips_1}
\end{equation}
Similarity in user affiliations form the most important likelihood of user support in MANETs in absence of direct relationship  amongst users. In case, such relationship exists, we infer social trust based on wall post interactions.

{\bf Trust based on wall post interactions:}
In this section we derive the social trust based on the interactions in social networks. The social trust is derived based on the frequency of data exchanges between users. We use the hypothesis that activities (frequent data exchanges between users) generally represent a strong relationship in the social network and hence a strong trust. For illustration purposes we use wall post exchange in Facebook as the method of interactions. We analyze the distribution of the number of wall posts per link and investigate to what extent a user $i$ exchanges wall posts with a particular user $j$ compared to all other users. Let us assume $W_{i,j}(t)$ is trust of $i$ on $j$ at time $t$ based on wall-post interactions. It is computed as follows:
\begin{equation}
W_{i,j}(t) = 1- e^{-ax(t)}
\label{wall_post_trust}
\end{equation}
where,
\[x(t) =  N_{i,j}(t) \frac{1}{    \big(\frac{N_i(t)}{C} \big)  } \]

where $N_{i,j}(t)$ is the number of accumulated wall posts by $i$ on $j$'s wall up to time $t$, $N_i(t)$ is the total number of wall posts by $i$ up to time $t$ and $C$ is the total number of contacts (friends) of node $i$. Therefore, $\frac{N_i(t)}{C}$ gives the average wall post rate of $i$.

To analyze our proposed interactions based trust, we use the data set given by \cite{Viswanath:2009, social_data_1}. This is a wall post data between September 26, 2006 and January 22nd, 2009 collected in the Facebook network of New Orleans region. Each wall post entry in the data set contains information about the wall owner, the user who made the post, the time at which the post was made (based on Unix time stamp), and the post content. The data is wall postings of $60,290$ ($66.7\%$) users and $838,092$ wall posts, in New Orleans networks.

 Calculated trust based on wall postings against the time in months is shown in Fig. \ref{social_r1}. The number of wall posts varies randomly against time. However, from  Fig. \ref{social_r1} we observe that, based on our approach we can able to make the trust decision within $3$ months of interactions. That means a person has enough interactions within $3$ months of time to achieve full trust (trust=1).

The net social trust on $j$ by $i$ at time $t$ ($S_{i,j}(t)$) is computed as weighted average of the IPS and wall-Post trust values:

\begin{equation}
S_{i,j}(t) = \eta(t)\times W_{i,j}(t) + (1-\eta(t))\times I_{i,j}(t)
\end{equation}

where $0\leq \eta(t) \leq 1$ is a time dependent proportionality constant. As we can see from the wall post data results, on an average it takes few months of interactions to place a full trust on a person. Therefore, in the initial stage, it is better to give more weight to the IPS based trust than a wall post interaction based trust. Hence, $\eta(t)$ should assume a less value in the initial time and should increase as a function of both time and the activity factor (total number of wall posts the person shares with all other users).

\subsection{Computation of Overall Trust}
\label{subsection:joint_trust}
The overall trust metric is computed as a function of both social trust and observed behavior.  
The trust of node $i$ for node $j$ is represented by $T_{ij} \in [0,1]$, which represents the probability that node $j$ is trustworthy based on information available to node $i$.
 We model $T_{ij}$ as an unknown system parameter that is estimated based on the observed events, denoted $O_{1}, \ldots, O_{n}$, using log likelihood analysis.  The prior distribution of $T_{ij}$ is derived from social trust.  We have
\begin{IEEEeqnarray}{rCl}
\label{eq:trust_def}
\nonumber
T_{ij} &=& \arg\max_{x \in [0,1]}{F_{ij}(x | O_{1}, \ldots, O_{n})} \\
&=& \arg\max_{x \in [0,1]}{F_{ij}(O_{1}, \ldots, O_{m}|x)F_{ij}(x)}
\end{IEEEeqnarray}
where $F_{ij}$ is the probability density function of $T_{ij}$, derived using the social trust values. We take the IPS metric as an example, leading to a Beta distribution for $F_{ij}$ defined by $F_{ij}(x) = x^{\alpha-1}(1-T)^{\beta-1}/B(\alpha, \beta)$.  Under this model, $T_{ij}$ is given by the following proposition.

\begin{proposition}
\label{prop:trust_metric}
Assuming a Beta distribution for $F_{ij}(T)$, the trust metric computed based on observed events $O_{1}, \ldots, O_{n}$ is equal to
\begin{equation}
T_{ij} = \frac{r + \alpha -1}{n + \alpha + \beta-2}
\end{equation}
where $r$ is the number of positive events.
\end{proposition}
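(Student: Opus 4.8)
The plan is to treat the definition in~\eqref{eq:trust_def} as a maximum a posteriori (MAP) estimation problem and to show that the claimed expression is precisely the mode of the resulting posterior density. The second line of~\eqref{eq:trust_def} already factors the posterior into a likelihood term $F_{ij}(O_{1},\ldots,O_{n}\mid x)$ and the Beta prior $F_{ij}(x)$, so the first task is to write each factor explicitly and then maximize their product over $x\in[0,1]$.

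First I would model the observations. Each event $O_{k}$ is treated as an independent Bernoulli trial in which node $j$ behaves well with probability $x$, so that with $r$ positive events out of $n$ the likelihood is $F_{ij}(O_{1},\ldots,O_{n}\mid x)=x^{r}(1-x)^{n-r}$. Multiplying by the Beta prior $F_{ij}(x)\propto x^{\alpha-1}(1-x)^{\beta-1}$ yields a posterior proportional to $x^{r+\alpha-1}(1-x)^{n-r+\beta-1}$, i.e. the kernel of a Beta$(r+\alpha,\,n-r+\beta)$ density. The normalizing constant $B(\alpha,\beta)$ and the combinatorial factor in the likelihood are independent of $x$ and may be dropped from the argmax, so the problem reduces to maximizing this kernel.

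The second step is the maximization itself. I would pass to the log-posterior $(r+\alpha-1)\log x+(n-r+\beta-1)\log(1-x)$, differentiate in $x$, and set the derivative to zero. This produces the linear equation $(r+\alpha-1)(1-x)=(n-r+\beta-1)x$, whose unique solution is $x=(r+\alpha-1)/(n+\alpha+\beta-2)$, exactly the claimed value of $T_{ij}$ (equivalently, the standard mode formula for a Beta distribution).

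The remaining and more delicate step is to confirm that this interior critical point is the global maximizer on the closed interval rather than an artifact. I would check that the log-posterior is strictly concave whenever both coefficients $r+\alpha-1$ and $n-r+\beta-1$ are positive, so the critical point is the unique maximum, and I would examine the endpoints $x=0$ and $x=1$ to exclude boundary maxima. This is where care is needed: for degenerate parameter values (for instance $r+\alpha-1\le 0$) the posterior mass concentrates at the boundary and the closed-form expression no longer applies. Under the standard assumption $\alpha,\beta>1$, which keeps both coefficients positive for every admissible $r$ and $n$, the interior solution is valid and the proposition follows.
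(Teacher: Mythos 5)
Your proposal is correct and follows essentially the same route as the paper's own proof: interpret \eqref{eq:trust_def} as MAP estimation, write the posterior kernel $x^{r+\alpha-1}(1-x)^{n-r+\beta-1}$, maximize the log-posterior $(r+\alpha-1)\log x + (n-r+\beta-1)\log(1-x)$ by setting the derivative to zero, and obtain $T_{ij}=\frac{r+\alpha-1}{n+\alpha+\beta-2}$. Your added verification of strict concavity, the endpoint check, and the caveat that the interior formula requires $r+\alpha-1>0$ and $n-r+\beta-1>0$ (e.g.\ $\alpha,\beta>1$) is more careful than the paper, which silently assumes the first-order condition yields the global maximizer, but it is a refinement of the same argument rather than a different proof.
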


\begin{proof}
From (\ref{eq:trust_def}), $T_{ij}$ is chosen to maximize
\begin{displaymath}
F_{ij}(O_{1}, \ldots, O_{n}|x)F_{ij}(x) = x^{r}(1-x)^{n-r}x^{\alpha-1}(1-x)^{\beta-1}
\end{displaymath}
which is equivalent to maximizing 
\begin{displaymath}
(r+\alpha-1)\log{x} + (n-r+\beta-1)\log{(1-x)}
\end{displaymath}
since the log function is monotonic.  Setting the derivative equal to zero implies that the maximum is achieved when $T_{ij} = \frac{r+\alpha-1}{n+\alpha+\beta-2}$.
\end{proof}

Intuitively, Proposition \ref{prop:trust_metric} states that node $i$'s trust for node $j$ is the  number of positive observations (from both social and behavioral trust) normalized by the total number of social and behavioral observations. Computation of $T_{ij}(n+1)$, the observed trust after $n+1$ observations, can be performed using a simple linear update rule, where $n^{\prime} = \alpha + \beta -2$ can be stored as a static system parameter:
\begin{equation}
\label{eq:update_rule}
T_{ij}(n+1) = \left\{
\begin{array}{cc}
T_{ij}(n)\frac{n+n^{\prime}}{n+n^{\prime}+1}, & \mbox{if $O_{n+1}$ is negative} \\
\frac{T_{ij}(n)(n+n^{\prime}) + 1}{n+n^{\prime}+1}, & \mbox{if $O_{n+1}$ is positive}
\end{array}
\right.
\end{equation}

Furthermore, note that if there are no social trust data available, then the update rule (\ref{eq:update_rule}) is equal to the fraction of positive observations, as above.

 Letting $T_{j}^{\ast}$ denote the true probability of negative behavior by node $j$, the mean-square error of the trust computation is given by $\mathbf{E}((T_{ij}(n)-T_{j}^{\ast})^{2})$.  The mean-square error is described by the following lemma.

 \begin{lemma}
 Suppose that each observation of node $j$ is positive with probability $T_{j}^{\ast}$ and negative otherwise, and suppose that, based on social trust data, the prior distribution of node $j$'s behavior is a Beta distribution with parameters $\alpha$ and $\beta$.  Define $n^{\prime} = \alpha + \beta -2$ and $T_{j}^{\prime} = \frac{\alpha-1}{\alpha + \beta -2}$.  Then
 \begin{equation}
 \mathbf{E}((T_{ij}(n) - T_{j}^{\ast})^{2}) = \frac{nT_{j}^{\ast}(1-T_{j}^{\ast}) + n^{\prime 2}(T_{j}^{\ast} - T_{j}^{\prime})}{(n + n^{\prime})^{2}}
 \end{equation}
 \end{lemma}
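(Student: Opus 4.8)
The plan is to treat the number of positive observations $r$ as the sole random quantity and to decompose the mean-square error into a variance term and a squared-bias term. First I would rewrite the estimator from Proposition~\ref{prop:trust_metric} in terms of the new constants: since $n' = \alpha + \beta - 2$ and $T_j' = \frac{\alpha-1}{\alpha+\beta-2}$, we have $\alpha - 1 = n' T_j'$, so the trust estimate becomes $T_{ij}(n) = \frac{r + n' T_j'}{n + n'}$. This isolates the randomness in $r$ and makes the dependence on the prior parameters explicit through the fixed quantities $n'$ and $T_j'$.

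Next I would invoke the modelling assumption that each of the $n$ observations is positive independently with probability $T_j^*$, so that $r$ is a Binomial$(n, T_j^*)$ variable with $\mathbf{E}[r] = n T_j^*$ and $\mathrm{Var}(r) = n T_j^*(1 - T_j^*)$. Because $T_{ij}(n)$ is an affine function of $r$ with constant denominator $(n+n')$, its variance is immediately $\mathrm{Var}(T_{ij}(n)) = \frac{n T_j^*(1 - T_j^*)}{(n+n')^2}$, which already supplies the first summand in the numerator of the claimed expression.

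The core of the argument is the bias. Using $\mathbf{E}[(T_{ij}(n) - T_j^*)^2] = \mathrm{Var}(T_{ij}(n)) + (\mathbf{E}[T_{ij}(n)] - T_j^*)^2$, I would compute $\mathbf{E}[T_{ij}(n)] = \frac{n T_j^* + n' T_j'}{n + n'}$ and subtract $T_j^*$; after clearing the common denominator the $n T_j^*$ terms cancel, leaving bias $= \frac{n'(T_j' - T_j^*)}{n + n'}$. Squaring contributes $\frac{n'^2 (T_j^* - T_j')^2}{(n+n')^2}$, and adding the two summands over the shared denominator $(n+n')^2$ completes the derivation.

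I expect the delicate point to be exactly this bias step, and not because it is computationally hard: carrying out the decomposition cleanly shows that the prior's influence enters through the \emph{squared} deviation $(T_j^* - T_j')^2$ between the true parameter and the prior mean $T_j'$, rather than through the unsquared $(T_j^* - T_j')$ that appears in the statement. I would therefore flag the numerator of the lemma as containing a typographical omission of the square; the correct expression is $\frac{n T_j^*(1 - T_j^*) + n'^2 (T_j^* - T_j')^2}{(n+n')^2}$, which behaves sensibly, decaying like $1/n$ for large $n$ and reducing to the pure sampling variance when the prior mean already equals $T_j^*$.
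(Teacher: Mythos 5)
Your proposal is correct and follows essentially the same route as the paper: both treat the number of positive observations $r$ as a Binomial$(n, T_{j}^{\ast})$ variable, rewrite the estimator from Proposition~\ref{prop:trust_metric} as the affine function $T_{ij}(n) = \frac{r + n^{\prime}T_{j}^{\prime}}{n+n^{\prime}}$, and compute the second moment of its deviation from $T_{j}^{\ast}$. The only stylistic difference is that the paper expands $\mathbf{E}\bigl(\bigl(\frac{r+T_{j}^{\prime}n^{\prime}}{n+n^{\prime}} - T_{j}^{\ast}\bigr)^{2}\bigr)$ directly using $\mathbf{E}(r)$ and $\mathbf{E}(r^{2})$, whereas you use the equivalent bias--variance decomposition; the two computations are algebraically identical, though yours makes the structure of the answer (sampling variance plus squared prior mismatch) more transparent. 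Most importantly, your flag of the typographical error is justified: the paper's proof stops at ``simplifying further \ldots yields the desired result'' without carrying out the final algebra, and that algebra, done by either method, produces
\begin{equation*}
\mathbf{E}\bigl((T_{ij}(n) - T_{j}^{\ast})^{2}\bigr) = \frac{nT_{j}^{\ast}(1-T_{j}^{\ast}) + n^{\prime 2}\,(T_{j}^{\ast} - T_{j}^{\prime})^{2}}{(n + n^{\prime})^{2}},
\end{equation*}
with the deviation squared, not the unsquared $(T_{j}^{\ast} - T_{j}^{\prime})$ printed in the lemma. A quick sanity check confirms your reading: with the unsquared term the stated expression can be negative (take $T_{j}^{\ast} < T_{j}^{\prime}$ with $n$ small and $n^{\prime}$ large), which is impossible for a mean-square error. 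You might also note a second minor slip in the paper's proof, where the target value is written as $p$ rather than $T_{j}^{\ast}$; your derivation avoids this by keeping the notation consistent throughout.
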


 \begin{proof}
 Let $r$ be a random variable denoting the number of positive observations.  We have
 \begin{IEEEeqnarray}{rCl}
 \mathbf{E}((T_{ij}(n)-T_{j}^{\ast})^{2}) &=& \mathbf{E}\left(\left(\frac{r+\alpha-1}{n+n^{\prime}} - p\right)^{2}\right) \\
  &=& \mathbf{E}\left(\left(\frac{r+T_{j}^{\prime}n^{\prime}}{n+n^{\prime}} - p\right)^{2}\right)
  \end{IEEEeqnarray}
  Simplifying further, using the fact that $r$ is binomial with $\mathbf{E}(r) = nT_{j}^{\ast}$ and $\mathbf{E}(r^{2}) = nT_{j}^{\ast}(1-T_{j}^{\ast}) + n^{2}T_{j}^{\ast 2}$, yields the desired result.
 \end{proof} 
 
 The trust formulation doesn't involve any specific overhead in network transmissions because social trust values are computed using offline social networks data (although they can be updated if needed) while behavioral trust on peer node is calculated using passive monitoring.

\section{Trust-based Flow Allocation}
\label{sec:flow_alloc}
We demonstrate the use/ advantage of social trust values using the example of flow allocation in MANETs.
We present a trust-based flow allocation optimization and  a distributed solution approach.
\subsection{System Model}
\label{subsec:comm_model}
We assume a set of users that communicate over a wireless network.  A wireless link $(i,j)$ exists between users $i$ and $j$ if $j$ is
within $i$'s wireless range; let $L$ denote the set of links.  Each link $l \sim (i,j)$ has a nonnegative capacity $c_{l}$.

Two (not necessarily disjoint) subsets $S,D \subseteq V$ of users act as data sources and destinations, respectively.
Each source $s \in S$ sends unicast traffic to a destination, denoted $d_{s} \in D$, with rate $r_{s}$.
When $d_{s}$ is not an immediate neighbor of $s$, $s$ maintains a set of paths, denoted $\mathcal{P}_{s}$, using an ad hoc
routing protocol \cite{jacquet2001optimized}.  Each path $\pi \in \mathcal{P}_{s}$ consists of a set of intermediate
links, so that $\pi = \{(s,i_{1}), \ldots, (i_{k}, d_{s})\}$.  Letting $r_{s,\pi}$ denote the rate of traffic from source
$s$ through path $\pi$, flow conservation and capacity requirements imply the following constraints:
\begin{IEEEeqnarray}{rCl}
\sum_{\pi \in l, s \in \mathcal{S}}{r_{s, \pi}} &\leq& c_{l}  \qquad \forall l \in L\\
\sum_{\pi \in \mathcal{P}_{s}}{r_{s, \pi}} &=& r_{s} \qquad \forall s \in S
\end{IEEEeqnarray}

where $\{\pi \in l\}$ denotes the set of paths that include $l \in L$ as an intermediate link.


\subsection{Adversary Model}
We assume two types of users in the network, \emph{benign} and \emph{malicious}.
Benign users who are not source nodes attempt to maximize the global utility by forwarding all packets they receive. Benign users who are also source nodes forward all packets they receive, and attempt to maximize their source rate using any residual capacity.  We assume that benign users have access to cached social network data that is used to compute social trust metrics.
Malicious users,
on the other hand, attempt to reduce the throughput of one or more sources.  This can be done when the malicious users lie
on a source-destination path, and can drop or re-order packets.


\subsection{Verifying user identities}
\label{subsec:asymmetric}
The use of social trust data by node i to determine the trustworthiness of node j can be thwarted if a selfish or malicious
user masquerades as a user trusted by i. This can be incorporated into trust metrics by asking a set of nodes, denoted $R(j)$,
 to vouch for j's true identity, as described by the following metric.

\begin{definition}
\label{def:asymmetric_metric}
The \emph{identity spoofing metric} $ISM(i,j)$ for link $(i,j)$ is defined to be the probability that
at least one of the users in $R_{j}$ is valid, given by
\begin{equation}
\label{eq:asymmetric_metric}
ISM(i,j) = 1 - \prod_{r \in R_{j}}{(1 - T_{i,r}ISM(i,r))}
\end{equation}
\end{definition}
where $T_{i,r}$ is the combined trust factor of $i$ on $r$ (combination of social and behavioral trust), and $ISM(i,r)$ is the trust of $i$ on $r$'s identity. It is assumed that $i$ has independently verified the identity of at least one user, $j$, resulting in $ISM(i,j) = 1$. This independence assumption is required to maintain a basic level of trust on each other's identity which is then propagated using combined trust metric ($T_{i,r}$). The values of $ISM(i,j^{\prime})$, for all $j^{\prime} \in V$, can then be computed using the fully-trusted users as a starting point.
 In (\ref{eq:asymmetric_metric}), $T_{i,r}ISM(i,r)$ is the probability that a node $r$'s claim of node $j$'s identity is trustworthy, equal to the probability that $r$ is a trustworthy user ($T_{i,r}$) times the probability that $r$'s identity is correct ($ISM(i,r)$). We observe that, if one of the nodes in $R_{j}$ is fully trusted ($T_{i,r} = ISM(i,r) = 1$), then $ISM(i,j)=1$ in (\ref{eq:asymmetric_metric}).

\subsection{Computing path trust}
\label{subsec:path_trust}
 A path is trusted if and only if each link in the path is trusted.  Assuming that the trustworthiness of each node $j$ in the path is independent of the trustworthiness of the other nodes in the path, the probability that the path on the whole is trustworthy is therefore equal to the product of the metric values for each node in the path.
\begin{definition}
\label{def:path_key_vulnerability}
Let $\pi = (i_{0}, \ldots, i_{n})$, where $i_{0} = s$ and $i_{n} = d$, denote a path between a source $s$ and a destination $d$.  The  path trust metric $T(\pi)$
is  equal to the product of the probabilities that each intermediate node in the path is trustworthy,
\begin{equation}
\label{eq:path_trust}
T_{\pi} = \prod_{k=0}^{n-1}{T_{i_{k}i_{k+1}}}
\end{equation}

\end{definition}
Each source can compute the trust in the path by piggybacking trust values on routing control packets.

\subsection{Problem formulation}
\label{subsec:centralized}
Two utility functions are considered.  In the first, the goal of each source destination pair $(s,d)$ is to maximize the available throughput. This can be expressed by choosing the utility function $U_{sd}^{(1)}(\mathbf{r}_{s})$, defined as
\begin{equation}
\label{eq:utility}
U_{sd}^{(1)}(\mathbf{r}_{s}) = \sum_{\pi \in P_{s}}{\log{(1 + r_{s,\pi})}} 
\end{equation}

In the second, the goal is to divide the flow among multiple paths, giving extra weight to the paths with higher trust value.  This goal of flow diversity is captured by the utility function
\begin{equation}
\label{eq:utility1}
U_{sd}^{(2)}(\mathbf{r}_{s}) = \sum_{\pi \in P_{s}}{-\frac{r_{s,\pi}}{T_{\pi}}\log{\frac{r_{s\pi}}{T_{\pi}}}}
\end{equation}

The utility function can also be a combination incorporating both throughput and diversity, with $U_{sd}(\mathbf{r}_{s}) = U_{sd}^{(1)}(\mathbf{r}_{s}) + \mu_{s}U_{sd}^{(2)}(\mathbf{r}_{s})$, where $\mu_{s}$ is a nonnegative constant that can be tuned to change the relative importance of each term.  Next, we state the optimization problem for multiple source-destination pairs. First, the flow through each link $l$ cannot exceed the link capacity constraint $c_{l}$.  Second, each path must meet a trust threshold $\tau_{T}$.  Third, each path must meet a  threshold on the probability that no identities have been spoofed, denoted $\tau_{S}$.  This results in the following optimization problem
\begin{equation}
\label{eq:flow_opt}
\begin{array}{l}
\max \>\>  \sum_{s \in S}{U_{sd}(\mathbf{r}_{s})}\\
r_{s\pi} \\
\\
\mbox{s.t.} \>\> \hspace{25pt} \sum_{s \in \mathcal{S}}{W_{s}\mathbf{r}_{s}} \leq \mathbf{c} \\
r_{s,\pi} = 0 \mbox{ if $\prod_{(i,j) \in \pi}{T_{i,j}} < \tau_{T}$}, \quad \forall s \in \mathcal{S}, \pi \in \mathcal{•}_{s}\\

 \end{array}
 \end{equation}

where $W_{s}$ is the routing matrix for source $s$, $(i,j)$-entry is $1$ if link $i$ is traversed by path $j$ and $0$ otherwise, while $\mathbf{c}$ is the vector of link capacities.

 Eq. (\ref{eq:flow_opt}) is a concave optimization problem, and can therefore be solved efficiently by a centralized authority.  In practice, however, no such centralized authority exists, and $T_{i,j}$ changes over time.  We instead propose a distributed approach as explained below.



\subsection{Distributed algorithm}
The utility function in (\ref{eq:flow_opt}) is the sum of the utility functions of each source $s$, each of which is a concave function of the source rates.  Distributed algorithms can therefore be found using dual decomposition methods \cite{chiang2007layering}.  The following analysis is based on the utility function $U_{sd}$, but also holds for $U_{sd}^{(1)}$ and $U_{sd}^{(2)}$.

The Lagrangian of (\ref{eq:flow_opt}) is given by
\begin{equation}
\label{eq:Lagrangian}
L(\mathbf{r}, \mathbf{\lambda}) = \sum_{s}{U_{sd}(\mathbf{r}_{s})} - \mathbf{\lambda}^{T}\left(\sum_{s}{W_{s}r_{s}} - \mathbf{c}\right)
\end{equation}

Let  $g(\mathbf{\lambda}) = \max_{\mathbf{r}}{L(\mathbf{r},\mathbf{\lambda})}$.  $g(\mathbf{\lambda})$ can be decomposed as
\begin{equation}
g(\mathbf{\lambda}) = \sum_{s \in S}{\max_{\mathbf{r}}{\{U_{sd}(\mathbf{r}_{s}) - \mathbf{\lambda}^{T}W_{s}\mathbf{r}_{s}\}}}
\end{equation}

Thus $g(\mathbf{\lambda})$ can be computed if each node independently solves the problem
\begin{equation}
\begin{array}{cc}
\mbox{maximize} & U_{sd}(\mathbf{r}_{s}) - \mathbf{\lambda}^{T}W_{s}\mathbf{r}_{s} \\
\mathbf{r}_{s} &
\end{array}
\end{equation}
Since $U_{sd}^{(i)}$ is concave, $\mathbf{r}_{s}$ can be determined efficiently by each $s \in \mathcal{S}$.

Since (\ref{eq:flow_opt}) is upper bounded by
\begin{equation}
\label{eq:dual_opt}
\begin{array}{cc}
\mbox{minimize} & g(\mathbf{\lambda}) \\
\mbox{s.t.} & \mathbf{\lambda} \geq 0
\end{array}
\end{equation}

and since $g(\mathbf{\lambda})$ can be obtained for given $\mathbf{\lambda}$, it remains to find the optimum value of $\mathbf{\lambda}$ (i.e., $\mathbf{\lambda}^{\ast}$) satisfying (\ref{eq:dual_opt}).  By maximizing (\ref{eq:Lagrangian}) with $\lambda = \lambda^{\ast}$, the optimum solution $\mathbf{r}$ satisfying (\ref{eq:flow_opt}) can then be found.

$g(\mathbf{\lambda})$ is the pointwise maximum of a set of convex functions, and is therefore convex.  Hence $g(\mathbf{\lambda})$ can be minimized using subgradient methods, in which at each time step $t$ $g(\mathbf{\lambda})$ is updated by setting
\begin{equation}
\mathbf{\lambda}_{l}(t+1) = \mathbf{\lambda}_{l}(t) - \frac{t_{0}}{t}\left(c_{l} - \sum_{s \in S}{W_{s}\mathbf{r}_{s}}\right)
\end{equation}


\begin{thm}
If each trust value $T_{i,j}$ converges to a steady-state value $\hat{T}_{i,j}$, then the above algorithm converges to the solution to (\ref{eq:flow_opt}) with $T_{i,j} = \hat{T}_{i,j}$.
\end{thm}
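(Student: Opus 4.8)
The plan is to separate the argument into two stages: first establish that, for the \emph{static} problem in which every trust value is pinned at its steady-state value $\hat{T}_{i,j}$, the dual subgradient iteration converges to the optimum of (\ref{eq:flow_opt}); then argue that running the same iteration with the actual time-varying values $T_{i,j}(t) \to \hat{T}_{i,j}$ introduces only a vanishing perturbation, so that the two trajectories share the same limit.

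For the static stage I would first record that with trust fixed the objective $\sum_{s} U_{sd}(\mathbf{r}_{s})$ is concave (each $U_{sd}^{(1)}$ is concave since $\log$ is concave, and each $U_{sd}^{(2)}$ is concave since $-\tfrac{r}{T_{\pi}}\log\tfrac{r}{T_{\pi}}$ is a concave entropy term composed with an affine map for fixed $T_{\pi}>0$), and that the feasible set is the intersection of the affine capacity constraints $\sum_{s} W_{s}\mathbf{r}_{s} \leq \mathbf{c}$ with the affine equalities $r_{s,\pi}=0$ on untrusted paths and the nonnegativity constraints, hence convex. Since all constraints are affine, strong duality holds whenever the problem is feasible, so there is no gap between (\ref{eq:flow_opt}) and the dual (\ref{eq:dual_opt}). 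As the text already notes, $g(\mathbf{\lambda})$ is convex as a pointwise maximum of affine functions of $\mathbf{\lambda}$, and by Danskin's theorem $c_{l} - \sum_{s} W_{s}\mathbf{r}_{s}$, evaluated at the per-source maximizers, is a subgradient of $g$ at $\mathbf{\lambda}$. The update uses the diminishing step size $\eta_{t} = t_{0}/t$, which satisfies $\sum_{t} \eta_{t} = \infty$ and $\sum_{t} \eta_{t}^{2} < \infty$; these are exactly the classical conditions under which the projected subgradient method for the constrained dual minimization converges to $\mathbf{\lambda}^{\ast}$, and recovering the primal optimizer through the per-source maximizations then yields convergence to the solution of (\ref{eq:flow_opt}).

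To handle the time-varying trust I would use the hypothesis $T_{i,j}(t) \to \hat{T}_{i,j}$ in two places. First, the constraint $r_{s,\pi}=0$ whenever $\prod_{(i,j)\in\pi} T_{i,j} < \tau_{T}$ depends only on which side of $\tau_{T}$ each path product lies; assuming generically that no steady-state product $\prod \hat{T}_{i,j}$ equals $\tau_{T}$ exactly, continuity gives a finite time $t_{1}$ after which the trusted-path set is frozen at the steady-state set. Second, where trust enters the objective continuously---through $T_{\pi}$ in $U_{sd}^{(2)}$---the per-source subproblem solved at time $t$ differs from the steady-state subproblem by an amount continuous in the trust values, hence vanishing. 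I would bound the resulting discrepancy between the subgradient actually used at step $t$ and the true steady-state subgradient as a continuous (vanishing) function of $\|T_{i,j}(t)-\hat{T}_{i,j}\|$, and then invoke a standard robustness result for subgradient methods with asymptotically vanishing errors to conclude that the perturbed iteration inherits the same limit $\mathbf{\lambda}^{\ast}$ as the static one.

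The main obstacle is precisely this interaction between the diminishing step size and the drifting data: I must show the subgradient error induced by $T_{i,j}(t) \neq \hat{T}_{i,j}$ decays quickly enough relative to $\eta_{t}$ that it cannot stall convergence, and I must rule out the degenerate case in which a path product sits exactly on the threshold $\tau_{T}$, which would make the feasible set (and hence the optimizer) discontinuous in the trust values. Establishing the quantitative continuity of the per-source maximizers, and therefore of the subgradients, in the trust parameters is the crux; once that perturbation bound is in hand, the purely static convergence is routine convex-duality and subgradient theory.
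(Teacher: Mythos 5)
Your proposal is correct and its skeleton matches the paper's proof: the paper also argues that each path product $\prod_{(i,j)\in\pi}T_{i,j}$ converges to $\hat{T}_{\pi}$ by continuity, that when $u_{\pi} = |\hat{T}_{\pi} - \tau_{T}| > 0$ there is a finite iteration $K_{\pi}$ after which the path's trusted/untrusted status is frozen, and that after $K = \max K_{\pi}$ the routing matrix is fixed so standard subgradient convergence (cited to Bertsekas) applies; it also flags, as you do, the degenerate case $\hat{T}_{\pi} = \tau_{T}$, where the status can oscillate indefinitely. Where you genuinely go beyond the paper is your second perturbation step: you correctly observe that the trust values enter not only the feasibility constraint but also the \emph{objective} itself through $T_{\pi}$ in $U_{sd}^{(2)}$, so even after the trusted-path set freezes, the algorithm is running a subgradient method on time-varying data, and one needs an $\epsilon$-subgradient or vanishing-error robustness argument (together with the diminishing, non-summable step size $t_{0}/t$) to conclude convergence to $\mathbf{\lambda}^{\ast}$. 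The paper's proof is silent on this point --- it treats fixing the routing matrix as sufficient --- so your version is strictly more careful; the paper's version buys brevity by implicitly treating $T_{i,j}$ as exactly equal to $\hat{T}_{i,j}$ after iteration $K$. Your supporting details (concavity of both utilities, strong duality via affine constraints with feasibility of the zero flow, Danskin's theorem identifying $c_{l} - \sum_{s}W_{s}\mathbf{r}_{s}$ as a subgradient) are all sound; the one step you should make explicit when fleshing this out is primal recovery: dual subgradient methods in general only give ergodic primal convergence, but here strict concavity of $U_{sd}^{(1)}$ and $U_{sd}^{(2)}$ in $\mathbf{r}_{s}$ makes the per-source maximizer unique and continuous in $\mathbf{\lambda}$, which is what licenses recovering the primal optimum from the dual limit --- a point the paper also elides.
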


\begin{proof}
For each path $\pi$, let $\hat{T}_{\pi} = \prod_{(i,j) \in \pi}{T_{i,j}}$.  Since $\prod_{(i,j) \in \pi}{T_{i,j}}$ is a continuous function of the $T_{i,j}$'s, $\prod_{(i,j) \in \pi}{T_{i,j}}$ converges to $\hat{T}_{\pi}$.  Let $u_{\pi} = |\hat{T}_{\pi} - \tau_{T}| > 0$, then there exists $K_{\pi}$ sufficiently large that, after the $K_{\pi}$-th iteration, $|\hat{T}_{\pi} - \prod_{(i,j) \in \pi}{T_{i,j}}| < u_{\pi}$.  Hence after the $K$-th iteration, where $K = \max{K_{\pi}}$, the routing matrix is fixed, and the subgradient algorithm converges to the global optimum \cite{bertsekas1999nonlinear}.  Note that if $\hat{T}_{\pi} = \tau_{T}$, then it is impossible to bound $\hat{T}_{\pi}$ away from $\tau_{T}$, and so the trust value may oscillate infinitely around the threshold and thus fail to converge.
\end{proof}

\section{Simulation Study}
\label{sec:simulation}
 \begin{figure*}[t]
\centering
$\begin{array}{ccc}
\includegraphics[width=2.4in, height =1.6in]{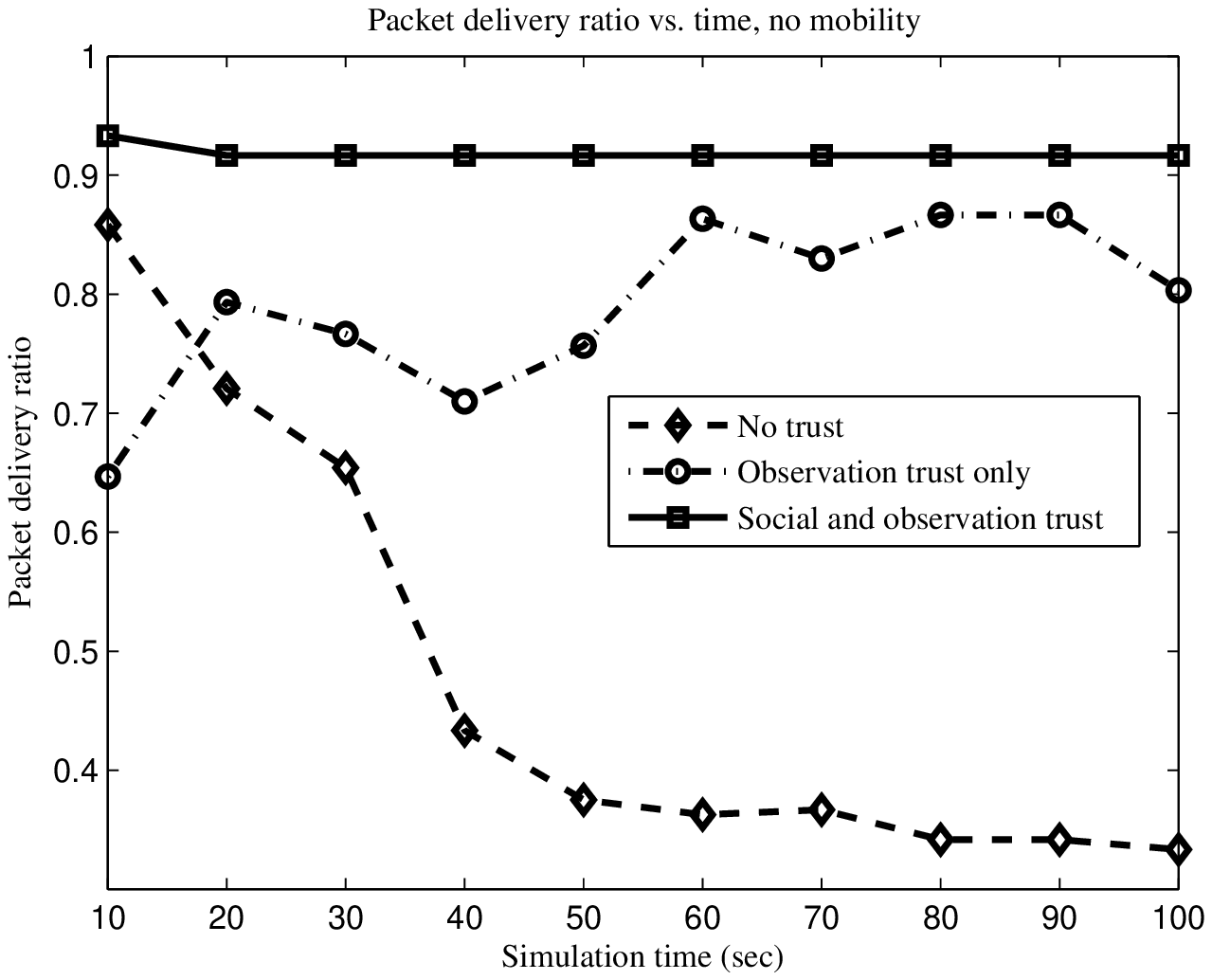} &
\includegraphics[width=2.4in, height =1.6in]{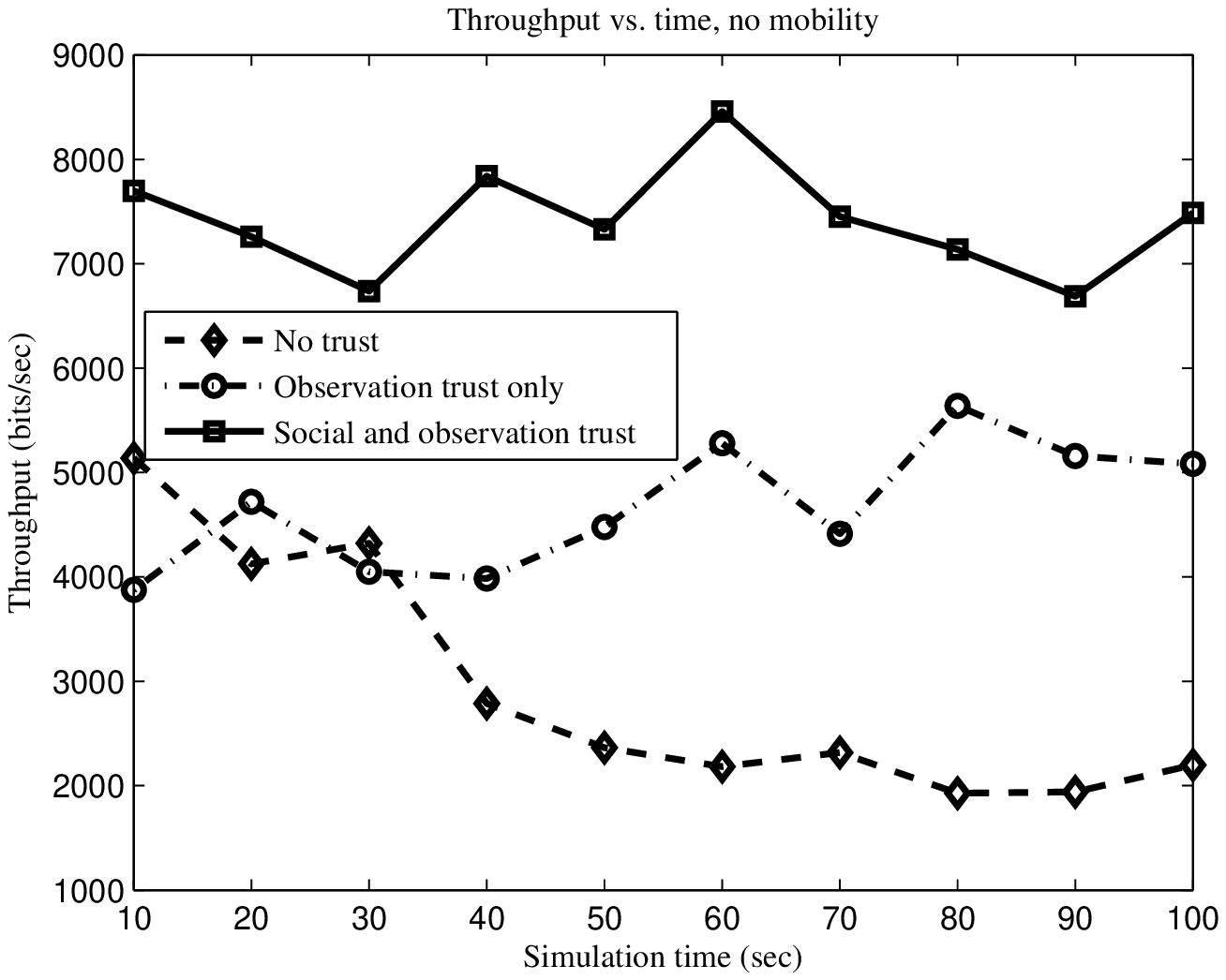} & 
\includegraphics[width=2.4in, height =1.6in]{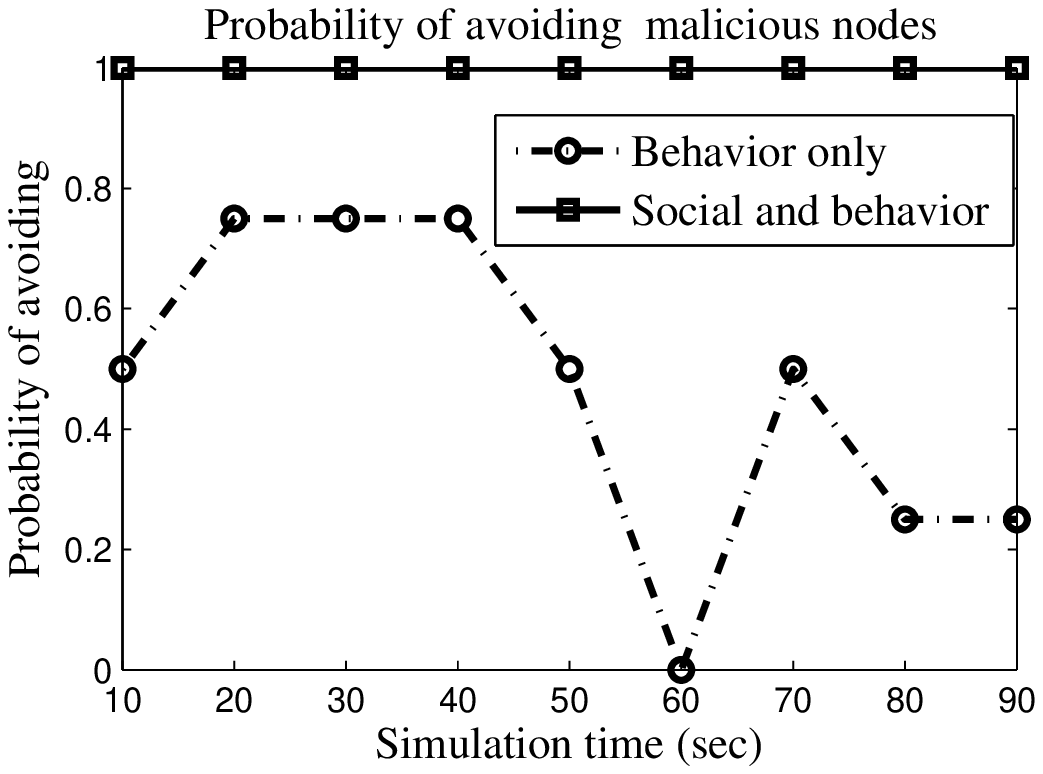} \\
(a) & (b) & (c) \\

\includegraphics[width=2.4in, height =1.6in]{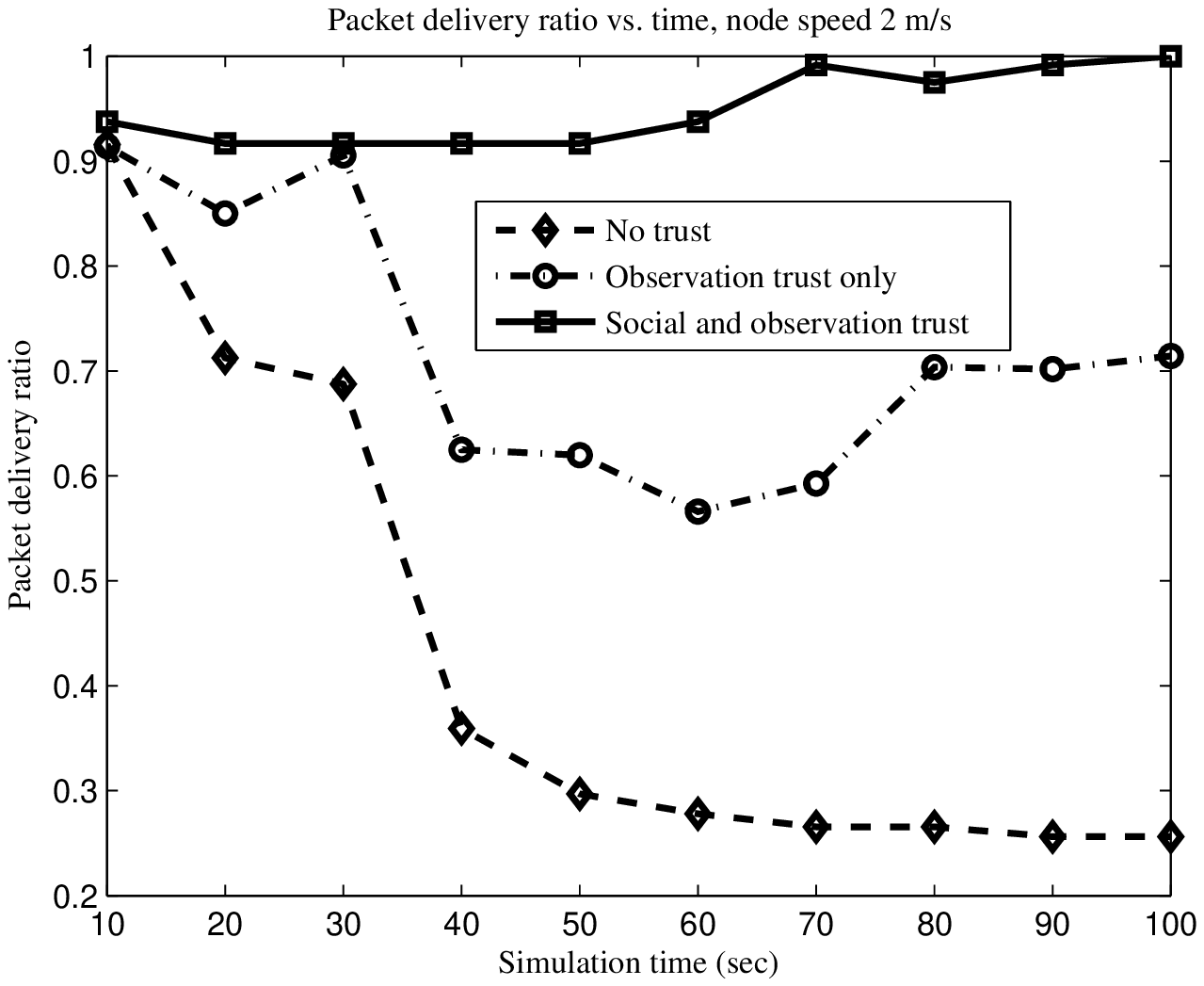} &
\includegraphics[width=2.4in, height =1.6in]{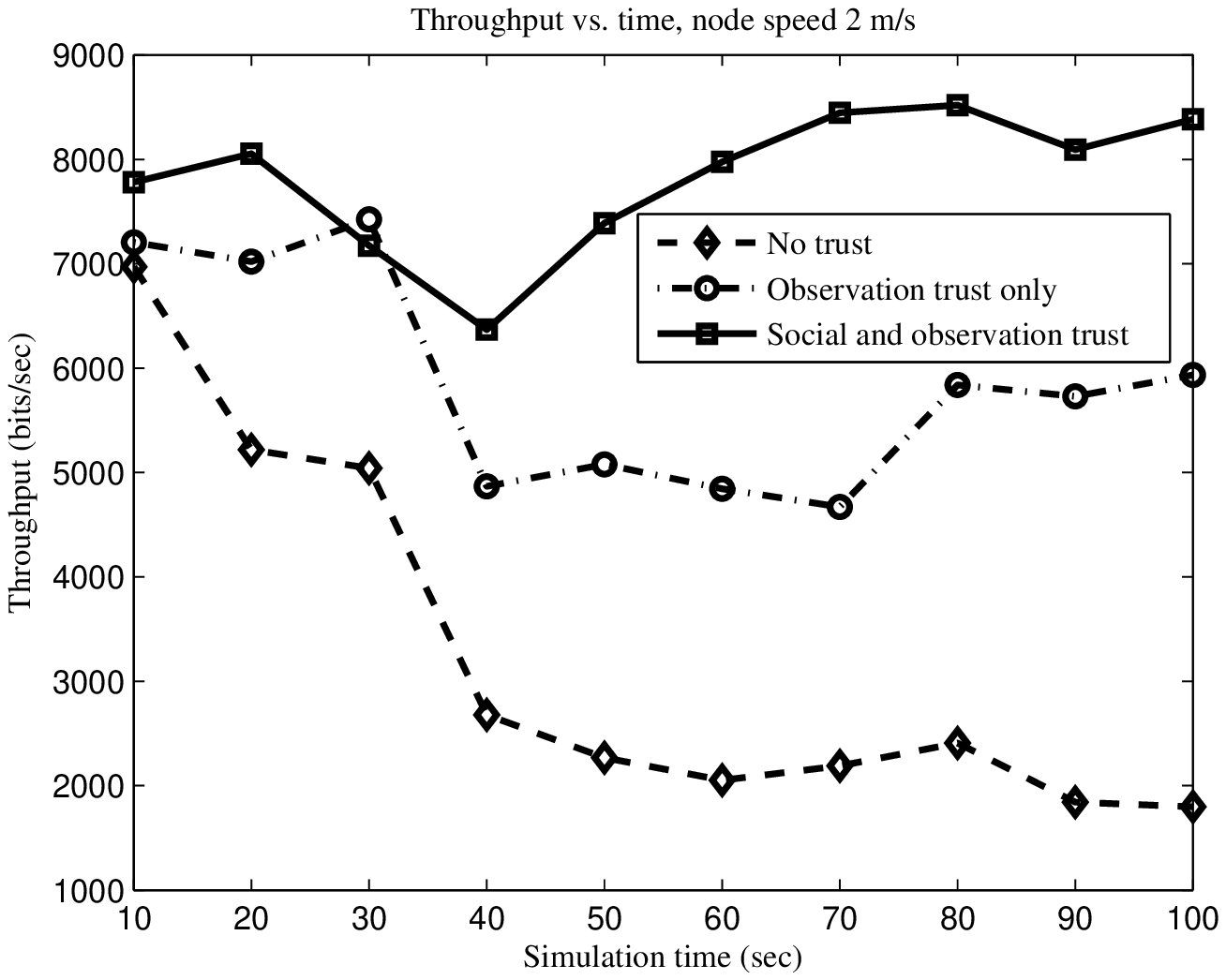} &
\includegraphics[width=2.4in, height =1.6in]{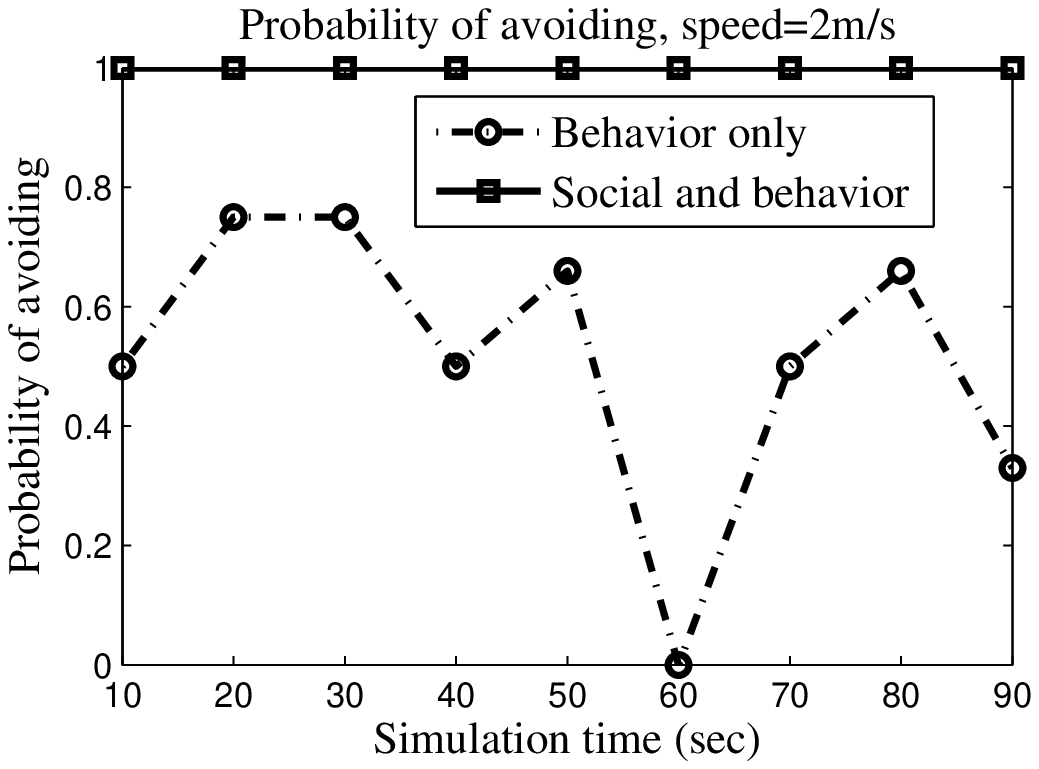}\\
 (d) & (e) & (f) \\
\end{array}$
\caption{Performance of ad hoc networks under different mobility and trust models. (a) In a static network, incorporating social trust increases the packet delivery ratio compared to behavioral trust. (b) This increase in delivery ratio results in an increase in achieved throughput.  (c) Social trust also increases the probability that packets will avoid malicious nodes. (d), (e) In a network of mobile users, there is a larger improvement in delivery ratio and throughput from social trust. (f) Probability of avoiding malicious nodes in a mobile environment.}
\label{fig:performance}
\end{figure*}

In this section, an ns-3 \cite{ns3} evaluation of our proposed flow control approach is described.

\subsection{Simulation Setup/ system model}
\label{subsec:simul_setup}

\begin{figure*}[ht]
\centering
$\begin{array}{cc}
\includegraphics[width=2.8in, height =1.6in]{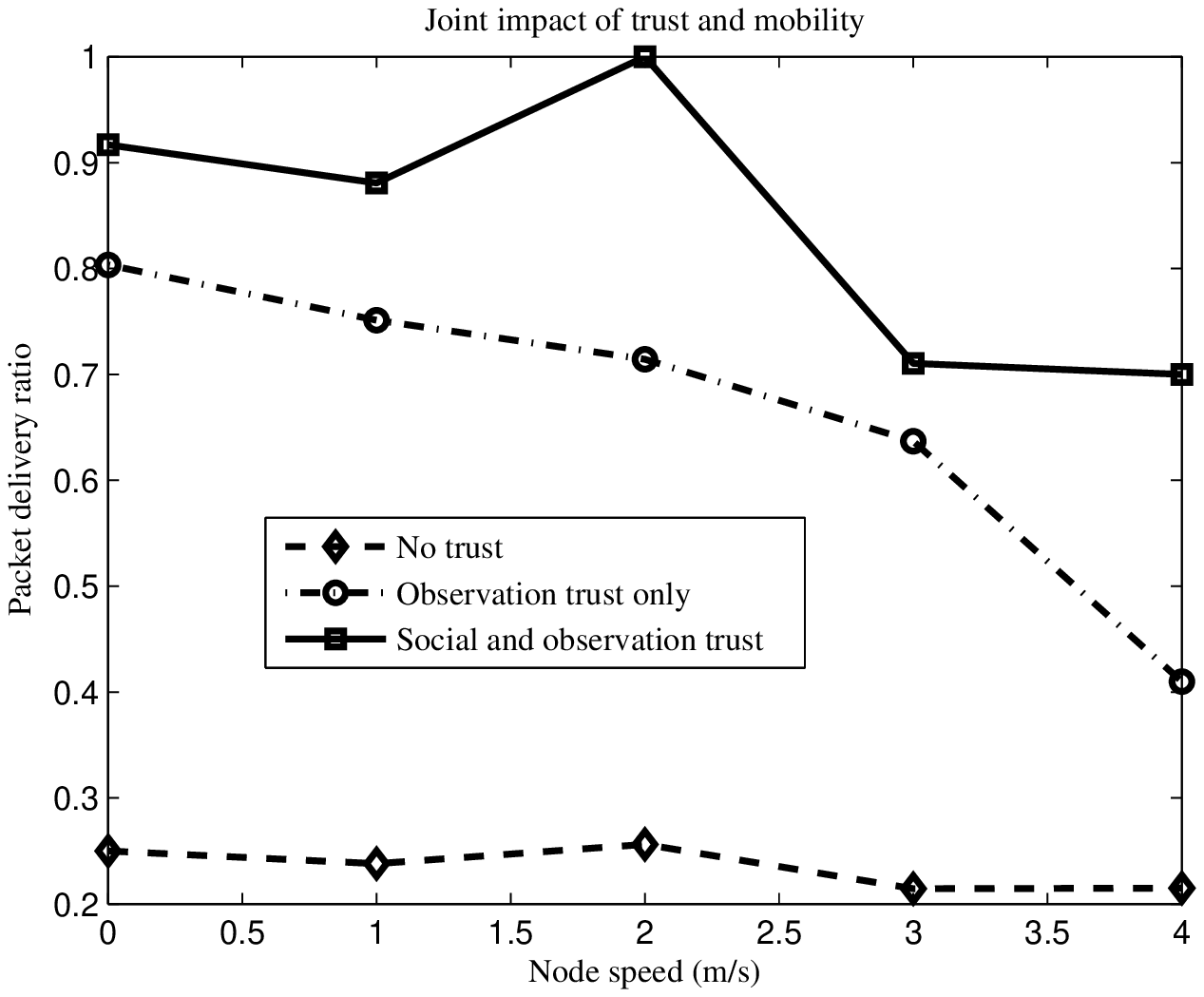} &
\includegraphics[width=2.8in, height =1.6in]{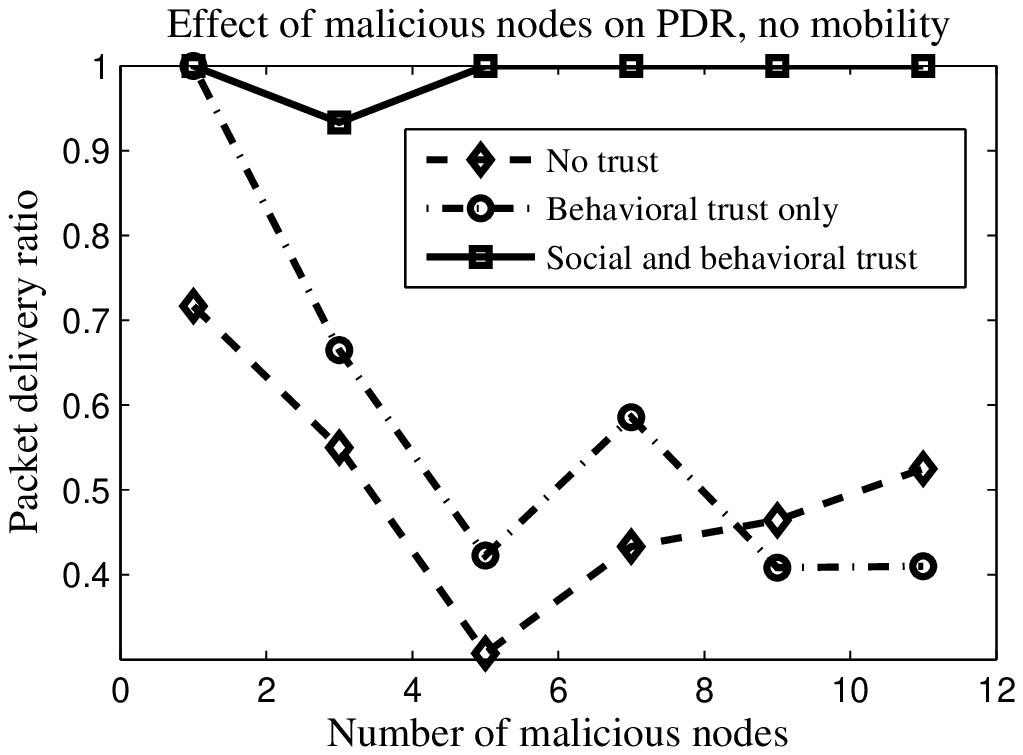} \\
(a) & (b) \\
\includegraphics[width=2.8in, height =1.6in]{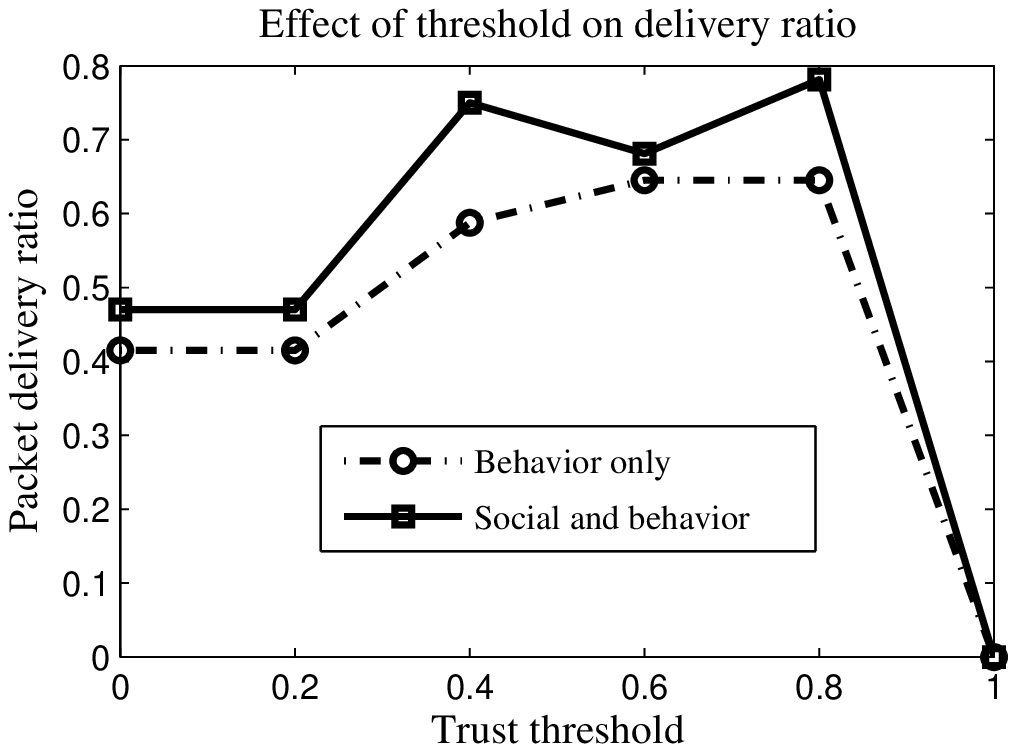} &
\includegraphics[width=2.8in, height =1.6in]{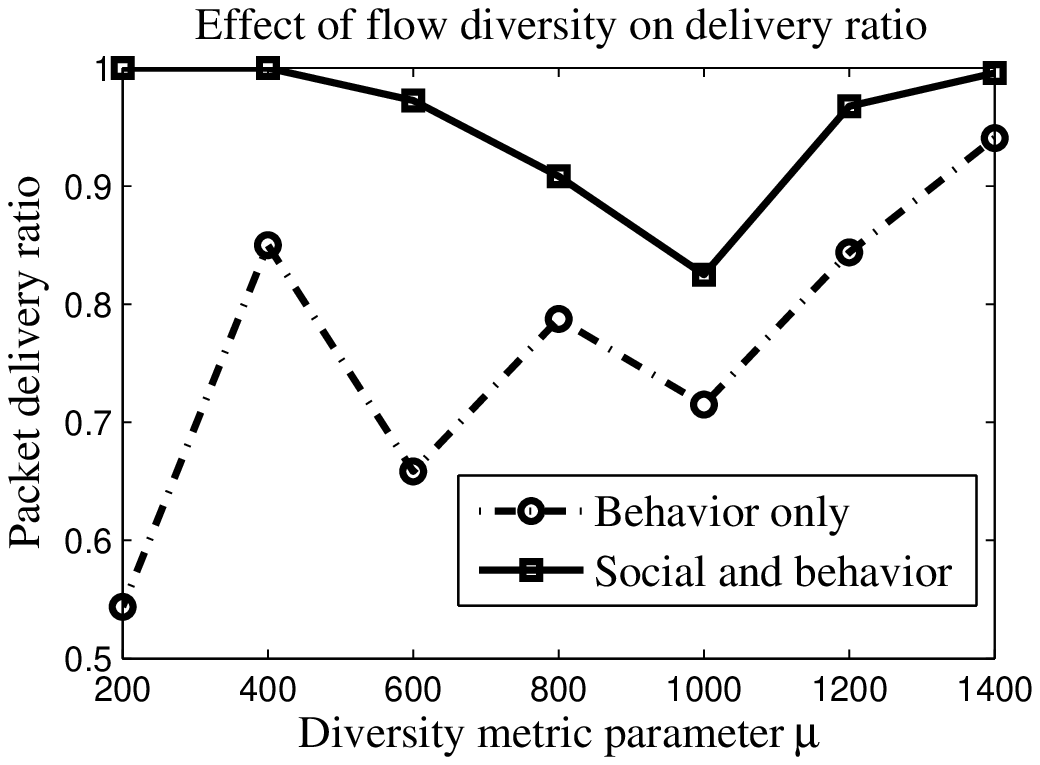} \\
(c) & (d) \\
\end{array}$
\caption{Effect of network and metric parameters on performance, as measured by packet delivery ratio. (a) Effect of mobility on packet delivery ratio using different trust metrics. (b) Performance degradation as number of malicious users increases.  (c) Effect of trust threshold on performance. (d) Effect of path diversity on delivery ratio.  Nodes move with speed 2 m/s. }
\label{fig:parameters}
\end{figure*}
We make some assumptions about the nodes and malicious nodes:
\begin{enumerate}
\item All nodes could operate in promiscuous mode for neighbor monitoring and behavioral trust computation.
\item All links are bidirectional and all nodes use omni-directional transceivers.
\item Misbehaving nodes may be selfish (i.e. refuse to forward the packets of other users with probability $p$) or maliciously spoof the identity of socially trusted nodes. However, they can't obtain any certificates to verify their identities.
\item The network is a multi-hop network.
\item A malicious node will have low social trust with other users. This is justified because a user with high social trust will not behave maliciously to protect his social relationships/ reputation. Of course, he may act maliciously/ selfishly to nodes with whom he has low social trust.
\item All nodes are assumed to have access to social network data.
\end{enumerate}

A network of $70$ nodes was simulated, with initial node positions chosen uniformly at random from a rectangular region of area $1500$m x $1500$m.  Static network topologies as well as a random waypoint mobility model with node speeds varying from 1 m/s to 5 m/s were considered.  A free-space path loss propagation model was assumed, resulting in each node having a radio range of approximately $400$m.  IEEE 802.11a was used at the PHY and MAC layers.  Each data point shown represents an average over $5$ trials.

 A subset of 10 nodes was chosen uniformly at random and designated as malicious nodes.  Upon receiving a packet for forwarding, a malicious node discarded the packet without forwarding with probability $0.8$. The probability $0.8$ was chosen under the assumption that malicious nodes will sometimes forward packets correctly in order to appear cooperative.

 For each packet forwarded by a node, each neighbor was assumed to observe the forwarding behavior with probability $0.1$.  When a node $i$ sent a packet to node $j$ for forwarding and node $j$ failed to forward the packet, node $i$ used this as evidence of malicious behavior and updated node $j$'s trust value accordingly.  Conversely, correct forwarding of packets was used as evidence of good node behavior. In addition to evidence-based trust, nodes were assumed to have access to a social network, in which, based on Figure \ref{social_r1}, malicious nodes had trust value 0.6 and valid nodes had trust values chosen uniformly at random in the interval [0.85, 1]. 

 Four pairs of nodes, designated as $(s_{i}, d_{i})$, $i=1,\ldots, 4$, were chosen at random from the set of non-malicious nodes to act as sources ($s_{i}$) and destinations ($d_{i}$). Each source $s_{i}$ determined a set of paths to destination $d_{i}$ using a modified version of the AOMDV routing protocol \cite{marina2002ad}.  The routing protocol was modified to include trust values in route advertisement packets, allowing sources to compute the overall trust of a path.  Based on the gathered route and trust information, each source chose a flow allocation according to \eqref{eq:flow_opt}. 

 \subsection{Simulation Results}
 \label{subsec:results}

Figures \ref{fig:performance}a and \ref{fig:performance}b show the average packet delivery ratio and throughput, respectively, over time in the case of a static network.
Using social trust, the sources are able to identify and avoid routes containing malicious nodes, resulting in high throughput and delivery ratio.
When only behavioral trust is used, however, nodes are unable to accumulate enough evidence to avoid malicious nodes.
As a result, the performance of behavioral trust is comparable to that when no trust metrics are used.

The impact of malicious users is exacerbated when nodes are mobile.  Due to mobility, each node will have limited time to observe its neighbors, and hence cannot gather sufficient evidence to evaluate that node's reliability.  This results in a further reduction in packet delivery ratio (Fig \ref{fig:performance}d) and throughput (Fig \ref{fig:performance}e).  Since the social trust values are independent of the network topology, incorporating social trust reduces the level of evidence needed to identify malicious nodes, making the trust management scheme robust to topology changes.

The impact of mobility is further illustrated in Fig \ref{fig:parameters}a. In all three scenarios (social and evidence-based trust, evidence-based trust alone, and no trust management), there is a decrease in packet delivery ratio as node speed increases.  This is because routes become inactive at a higher rate, leading to losses when packets are sent over links that no longer exist.  Furthermore, when new routes are selected in response to topology changes, they may include malicious nodes.  Under the social trust-based method, these malicious nodes are detected faster, and hence flows are not allocated along routes containing malicious nodes.


The effect of malicious nodes is more pronounced as the number of malicious nodes in the system increases (Fig \ref{fig:parameters}b).  When only behavioral trust is used, gathering enough evidence to avoid all malicious nodes becomes increasingly difficult.  Using social trust mitigates this effect, until the number of malicious nodes increases to the point where they cannot be avoided in routing.


In addition to the number of malicious nodes, network performance is affected by the trust threshold, $\tau_{T}$ (Fig \ref{fig:parameters}c). When the threshold is low, any route, even one containing untrustworthy nodes, can be used for packet delivery, resulting in malicious behavior and packet drops. As the threshold increases, the delivery ratio improves; however, if the threshold is too high, then no nodes will meet the threshold, and hence no paths can form.


The effect of flow diversity metrics is illustrated in Fig \ref{fig:parameters}d.  For both social trust and behavioral trust, increasing flow diversity leads to a higher packet delivery ratio.  The effect is more pronounced for behavioral trust because packets are more likely to avoid undetected malicious nodes.  Moreover, by dividing the flow among multiple paths, evidence regarding a wider collection of nodes is gathered, allowing more rapid detection of malicious nodes.

\begin{figure}[t]
\centering
\includegraphics[width=2.8in, height =1.5in]{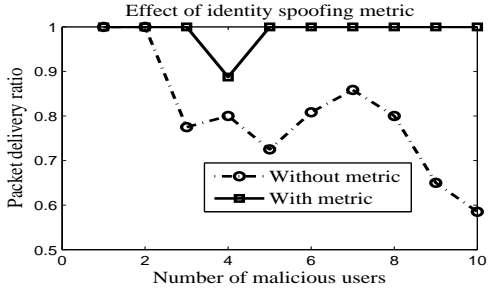}
\caption{Effect of identity spoofing on delivery ratio.}
\label{fig:vulnerability}
\end{figure}

In earlier simulations, ISM was incorporated. Figure \ref{fig:vulnerability} shows the comparison when ISM is disabled.  By incorporating the metric, the probability of routing packets through a node with a spoofed identity is greatly reduced (except a random case for 4 malicious users).

\section{Related Work}
\label{sec:related}
There is active ongoing research in establishing trust in the wireless networks, as part of the broader topic of security in ad hoc networks~\cite{hubaux2001quest}.
 An authentication and secure channel establishment protocol for trust propagation for multihop wireless home networks is proposed in \cite{trust_propagation}. A distributed approach that establishes reputation-based trust among sensor nodes in order to identify malfunctioning and malicious sensor nodes and minimize their impact on applications is proposed in \cite{statistical_trust}. A formalism to design trust as a computational component is proposed in \cite{trust_computational}.

 A detailed survey on trust computations in MANETs is carried out in \cite{kannan_trust} and on trust management in \cite{jinhee_trust}. These studies, however, do not consider the use of social  trust in an ad hoc network scenario.


Flow allocation and rate control have been studied using optimization methods \cite{kelly1998rate}.  Dual decomposition methods, in which constraints for each source are decoupled in order to develop distributed algorithms, are one widely-used optimization method \cite{chiang2007layering}.  These methods, however, do not incorporate the trustworthiness of nodes or the possibility of node misbehavior, and hence cannot be guaranteed in the presence of malicious users.

Routing in the presence of malicious nodes has been considered in a variety of existing works~\cite{marti2000mitigating,karlof2003secure}.  
 Malicious nodes can also assume multiple node identities (the Sybil attack~\cite{newsome2004sybil,yu2006sybilguard}).  Rather than preventing an adversary from disrupting the routing protocol, however, our focus is on identifying potential malicious nodes and allocating flows that avoid them.

There have been some efforts to enable social networking in ad hoc networks~\cite{li09mobisn,sarigl09} but the effect of social relationships on network performance has not been studied so far. MobiSN~\cite{li09mobisn} implements all of the core features of a mobile ad hoc social networking including profile generating, friend matchmaking, routing control, ad hoc multi-hop text messaging and file sharing. It is implemented in Java 2 Micro Edition (J2ME) for Bluetooth-enabled mobile phones. Sarigol et al~\cite{sarigl09}
 present Adsocial, designed to run on resource-constrained mobile devices and share data using a simple and efficient data piggybacking mechanism. 

Our work  differs substantially compared to all the related work mentioned above. Apart from uniquely combining trust derived from social media with the behavioral trust, we also incorporate the composed metric into network flow allocation. In our knowledge this is an unique attempt in this direction.

\section{Conclusion}
\label{sec:conclusion}
In this paper, we proposed a flow allocation problem based on trust metrics. We introduced a framework for integrating social trust into mobile ad hoc networks. As a first step, we demonstrated how to combine the social and observation trust into a single trust metric and update this value over time. We have provided social trust calculation strategy based on real time traces.  A flow allocation optimization framework was then developed for allocating network flows based on trust metrics, incorporating trust, throughput, and capacity constraints.  Distributed algorithms for obtaining the optimal throughput were provided.

 We achieve an improvement in terms of throughput with our proposed approach (social plus behavioral trust) than with behavioral trust alone. Our scheme is also capable of avoiding malicious nodes during routing and providing higher packet delivery ratio as the network operation time increases.  This implies that both behavioral and social trusts are important and with together they provide a significant performance improvement. Especially social trust can play a crucial role in providing a significantly higher overall end-to-end performance in mobile networks.



\bibliographystyle{IEEEtran}
\bibliography{social_trust}

\end{document}